\definecolor{lipics}{rgb}{0.85,0.85,0.86}
\theoremstyle{definition}
\theoremstyle{plain}
\theoremstyle{remark}
\newtheorem{cor}{Corollary}
\theoremstyle{plain}
\pgfplotsset{
    /pgfplots/ybar legend/.style={
    /pgfplots/legend image code/.code={%
       \draw[##1,/tikz/.cd,yshift=-0.25em]
        (0cm,0cm) rectangle (6pt,0.7em);},
   },
}
\DeclareMathOperator{\dplus}{+\kern -0.45em+}
\newcommand\mydots{\hbox to 1em{.\hss.\hss.}}
\newcommand{\biggap}{\quad\quad\quad}
\newcommand{\ba}{\begin{array}}
\newcommand{\ea}{\end{array}}
\newcommand{\ei}{\end{array}}
\newcommand{\bcases}{\left\{\begin{array}{ll}}
\newcommand{\ecases}{\end{array}\right.}
\newcommand{\defmbox}[1]{ \mbox{\biggap #1} }
\newtheoremstyle{definition}
  {2.5pt plus 0.5pt minus 0.5pt} 
  {2.5pt plus 0.5pt minus 0.5pt} 
  {} 
  {} 
  {\bfseries} 
  {.} 
  {.5em} 
  {} 
\theoremstyle{definition}
\newcommand{\paratitle}[1]{\textbf{#1}.}
\newcommand{\m}[1]{\texttt{#1}}
\newcommand{\proofcase}[1]{\vspace{1mm}\textit{Case} \textsc{#1}:}
\def\monospaced#1€{$\texttt{#1}$}
\title{On the Soundness of Coroutines with Snapshots}
\author[1]{Aleksandar Prokopec}
\author[2]{Fengyun Liu}
\affil[1]{Oracle Labs, Switzerland\\
  \texttt{aleksandar.prokopec@oracle.com}}
\affil[2]{École Polytechnique Fédérale de Lausanne, Switzerland\\
  \texttt{fengyun.liu@epfl.ch}}
\authorrunning{A.\, Prokopec and F.\, Liu} 
\keywords{coroutines, continuations, asynchronous programming, inversion of control}
\begin{document}

\maketitle

\begin{abstract}
\emph{Coroutines} are a general control flow construct
that can eliminate control flow fragmentation
inherent in event-driven programs,
which is still missing in many popular languages.
Coroutines with snapshots are
a first-class, type-safe, stackful coroutine model,
which unifies many variants of suspendable computing,
and is sufficiently general to express
iterators, single-assignment variables, async-await,
actors, event streams,
backtracking, symmetric coroutines and continuations.

In this paper, we develop a formal model called $\lambda_\rightsquigarrow$
that captures the essence of type-safe, stackful, delimited coroutines with snapshots.
We prove the standard progress and preservation safety properties.
Finally, we show a formal transformation
from the $\lambda_\rightsquigarrow$ calculus
to the simply-typed lambda calculus with references.
\end{abstract}





\section{Introduction}

Asynchronous programming is becoming increasingly important,
with applications ranging from
actor systems \cite{akka,Haller:2009:SAU:1496391.1496422},
futures and network programming
\cite{Cantelon:2013:NA:2601501,SIP14},
user interfaces \cite{EPFL-REPORT-176887},
to functional stream processing
\cite{Meijer:2012:YMD:2160718.2160735}.
Traditionally, these programming models
were realized either by blocking execution threads
(which can be detrimental to performance \cite{1327913}),
or callback-style APIs
\cite{Cantelon:2013:NA:2601501,SIP14,johnson1988designing},
or with monads
\cite{Wadler:1995:MFP:647698.734146}.
However, these approaches often feel unnatural,
and the resulting programs can be hard to understand and maintain.
\textit{Coroutines} \cite{Conway:1963:DST:366663.366704}
overcome the need for blocking threads, callbacks and monads
by allowing parts of the execution to pause at arbitrary points,
and resuming that execution later.

In related work \cite{prokopec2018coroutines},
we introduced \emph{stackful coroutines with snapshots},
and showed that they are sufficiently general to express
iterators, single-assignment variables,
async-await, actors, event streams, backtracking,
symmetric coroutines and continuations.
Additionally, we provided an efficient implementation of coroutines with snapshots
for Scala \cite{coroutines}.

In this paper, we develop a formal model called $\lambda_\rightsquigarrow$
that captures the essence of type-safe, stackful, delimited coroutines with snapshots.
We prove the standard progress and preservation safety properties.
Finally, we show a formal transformation
from the $\lambda_\rightsquigarrow$ calculus
to the simply-typed lambda calculus with references.

\section{Simply Typed Lambda Calculus with Coroutines ($\lambda_\leadsto$)}
\label{sec:simply-typed-coroutines}

In addition to the standard abstraction, application and variable terms
associated with the lambda calculus,
the $\lambda_\leadsto$ extension defines coroutines and the accompanying operations.
A coroutine can be defined, called, resumed, suspended and copied.
A coroutine definition is similar to a function definition,
the main difference being that the body of the coroutine can suspend itself.
Calling a coroutine creates an invocation value of that coroutine,
which is initially paused.
When resumed, that invocation runs until suspending, or completing.
A coroutine instance is suspended whenever it evaluates the \texttt{yield} term
(when this happens, we say that the evaluation \emph{yields}),
and can be subsequently resumed from the same point.
The state of a coroutine instance can also be duplicated into a new instance,
which is referred to as creating a \emph{snapshot}.

A coroutine instance is represented not just by the respective term,
but also by its suspended computation state
(a coroutine \emph{instance} is a \emph{stateful} entity).
Multiple variables can refer to (i.e. alias) the same coroutine instance,
which holds some mutable state.
For this reason, coroutine instances are represented
with special instance labels $i$.
Each instance label has a corresponding mapping in the \emph{coroutine store},
as we will show.

Importantly, in the model that we will define,
coroutines are \emph{delimited}.
This means that yielding (i.e. suspending a coroutine)
must take place inside the program region that is specially marked as a coroutine,
and only such regions of the program can get suspended.
This region is \emph{not} necessarily lexically scoped,
since a coroutine definition can call into another coroutine definition
(defined elsewhere).
In other words, we model stackful delimited coroutines.

We start by defining the syntax for $\lambda_\leadsto$,
which consists of the user terms, and the runtime terms
(i.e. terms that arise only during evaluation).

\begin{definition}[Syntax]\label{def:syntax}
The $\lambda_\leadsto$ \emph{programming model}
consists of the following terms,
which can appear in user programs:

\begin{minipage}[b]{12 cm}
$\ba[t]{l@{\hspace{15mm}}r}
\texttt{t} ::= & \defmbox{user terms:} \\
\biggap \texttt{(x:T) => t}
& \defmbox{abstraction}  \\
\biggap \texttt{t(t)}
& \defmbox{application}  \\
\biggap \texttt{x}
& \defmbox{variable}  \\
\biggap \texttt{()}
& \defmbox{unit value}  \\
\biggap \texttt{(x:T)} \overset{\texttt{T}}{\rightsquigarrow} \texttt{t}
& \defmbox{coroutine}  \\
\biggap \texttt{yield(t)}
& \defmbox{yielding}  \\
\biggap \texttt{start(t, t)}
& \defmbox{coroutine instance creation}  \\
\biggap \texttt{resume(t, v, v, v)}
& \defmbox{resuming}  \\
\biggap \texttt{snapshot(t)}
& \defmbox{snapshot creation}  \\
\biggap \texttt{fix(t)}
& \defmbox{recursion}  \\
\ea$
\end{minipage}
\\*

The $\lambda_\leadsto$ model defines the following types:

\begin{minipage}[b]{12 cm}
$\ba[t]{l@{\hspace{42mm}}r}
\texttt{T} ::= & \defmbox{types:} \\
\biggap \texttt{T => T}
& \defmbox{function type}  \\
\biggap \texttt{T} \overset{\texttt{T}}{\rightsquigarrow} \texttt{T}
& \defmbox{coroutine type}  \\
\biggap \texttt{T} \leftrightsquigarrow \texttt{T}
& \defmbox{coroutine instance type}  \\
\biggap \texttt{Unit}
& \defmbox{unit type}  \\
\biggap \bot
& \defmbox{bottom type}  \\
\ea$
\end{minipage}
\\*

The $\lambda_\leadsto$ \emph{model of computation}
additionally defines the following \emph{runtime terms},
which cannot appear in a user program,
but can appear during the evaluation of a program:

\begin{minipage}[b]{12 cm}
$\ba[t]{l@{\hspace{36mm}}r}
\texttt{r} ::= & \defmbox{runtime terms:} \\
\biggap i
& \defmbox{coroutine instance}  \\
\biggap \langle \texttt{t, v, v, v} \rangle_i
& \defmbox{coroutine resumption}  \\
\biggap \llbracket \texttt{t} \rrbracket_\texttt{v}
& \defmbox{suspension}  \\
\biggap \varnothing
& \defmbox{empty term}  \\
\ea$
\end{minipage}
\\*

The following subset of terms are considered \emph{values}:

\begin{minipage}[b]{12 cm}
$\ba[t]{l@{\hspace{42mm}}r}
\texttt{v} ::= & \defmbox{values:} \\
\biggap \texttt{(x:T) => t}
& \defmbox{abstraction}  \\
\biggap \texttt{()}
& \defmbox{unit value}  \\
\biggap \texttt{(x:T)} \overset{\texttt{T}}{\rightsquigarrow} \texttt{t}
& \defmbox{coroutine}  \\
\biggap i
& \defmbox{coroutine instance}  \\
\biggap \varnothing
& \defmbox{empty term}  \\
\ea$
\end{minipage}
\\*
\qed
\end{definition}

We once more highlight the difference between a \emph{coroutine},
which is akin to a function definition,
and a \emph{coroutine instance} which is aking to an invocation of a function.
Since a coroutine instance, unlike a function invocation,
can be suspended and resumed,
it must be a first-class value that the program can refer to.
We therefore distinguish between a coroutine type
$\texttt{T}_1 \overset{\texttt{T}_y}{\rightsquigarrow} \texttt{T}_2$
(where $\texttt{T}_1$ is the input type,
$\texttt{T}_2$ is the return type,
and $\texttt{T}_y$ is the yield type)
and the coroutine instance type $\texttt{T}_y \leftrightsquigarrow \texttt{T}_2$
(where $\texttt{T}_y$ is the yield type,
and $\texttt{T}_2$ is the return type).

Before defining the typing rules for $\lambda_\leadsto$,
we first define the contexts in which the typing of a term takes place.
There are two kinds of contexts that we need --
the first is the standard typing context $\Gamma$ used to track variable types,
and the second is a \emph{instance typing} $\Sigma$,
used to track the types of the coroutine instances that exist at runtime.

\begin{definition}[Typing context]\label{def:typing-context}
The \emph{typing context} $\Gamma$
is a sequence of variables and their respective types,
where the comma operator ($,$) extends a typing context with a new binding:

\begin{minipage}[b]{12 cm}
$\ba[t]{l@{\hspace{54mm}}r}
\Gamma ::= & \defmbox{typing context:} \\
\biggap \varnothing
& \defmbox{empty context}  \\
\biggap \Gamma , \texttt{x:T}
& \defmbox{variable binding}  \\
\ea$
\end{minipage}
\\*
\qed
\end{definition}

\begin{definition}[Instance typing]\label{def:instance-typing}
The \emph{instance typing} $\Sigma$
is a sequence of coroutine instance labels and their respective types,
where the comma operator ($,$) extends an instance typing with a new binding:

\begin{minipage}[b]{12 cm}
$\ba[t]{l@{\hspace{44mm}}r}
\Sigma ::= & \defmbox{instance typing:} \\
\biggap \varnothing
& \defmbox{empty instance typing}  \\
\biggap \Sigma , i\texttt{:T}
& \defmbox{new instance}  \\
\ea$
\end{minipage}
\\*
\qed
\end{definition}

Aside from tracking the type of each term,
our typing rules will track the type of values that a term can yield.
This allows typechecking coroutine declarations against
the values yielded in their bodies.
Therefore, our typing relation will be a five place relation
between the typing context, instance typing, the term and its type,
and the yield type.

\begin{definition}[Typing relation]\label{def:typing-rules}
The \emph{typing relation}
$\Sigma | \Gamma \vdash \texttt{t:T} | \texttt{T}_y$
on $\lambda_\leadsto$
is a relation between the instance typing $\Sigma$,
the typing context $\Gamma$,
the term $t$,
the type of the term $T$,
and the \emph{yield type} $T_y$,
where $T_y$ denotes the type of values that may be yielded during the
evaluation of the term $t$.
The inductive definition of this typing relation
is shown in Fig. \ref{fig:typing-relation} and Fig. \ref{fig:runtime-typing-relation}.
\qed
\end{definition}

\begin{figure}[t]
\input{typing-rules.tex}
\caption{Typing relation on terms}
\label{fig:typing-relation}
\end{figure}

We now briefly discuss the typing rules in Fig. \ref{fig:typing-relation}.
The rules \textsc{T-Abs}, \textsc{T-App}, \textsc{T-Var} and \textsc{T-Unit}
are standard in simply typed lambda calculus.
In our case, we base the typing judgement
on the instance typing $\Sigma$ in addition to the typing context $\Gamma$.
Furthermore, each typing judgement has the yield type as the last element.
The rule \textsc{T-App} allows the evaluation of the lambda \texttt{t}$_1$
and the argument \texttt{t}$_2$ to yield values of some type \texttt{T}$_y$
(but the type \texttt{T}$_y$ must be the same for both terms).
The rule \textsc{T-Var} assigns the bottom type $\bot$
as the yield type of a variable,
since this term does not yield any values.

Note that the rule \textsc{T-Abs} requires that the body of the lambda
has the yield type $\bot$, that is, does not yield any values.
The reason for this is that we model \emph{delimited} coroutines.
Yielding is only allowed from within a lexical scope of a coroutine definition,
that is, a yielded value cannot cross a function boundary.
In our concrete implementation,
this means that we allow defining a coroutine that normally invokes
a 3rd party library function,
but the body of that 3rd party library function is not allowed to yield
(unless the 3rd party library function is itself a coroutine
defined using our transformation).

The rule \textsc{T-Ctx} says that any term \texttt{t}
of type \texttt{T} and the yield type $\bot$
can be assumed to have any yield type \texttt{T}$_y$.
A typechecker may apply this rule when consolidating
two terms one of which does not yield.
For example,
given a term \texttt{t}$_1$ whose yield type is \texttt{T}$_y \neq \bot$,
and a term \texttt{t}$_2$ whose yield type is $\bot$,
a typechecker must apply \textsc{T-Ctx} before applying the rule \textsc{T-App}.


The rule \textsc{T-Coroutine}
types a coroutine declaration.
Similar to how \textsc{T-Abs} types a lambda, this rule types
a coroutine using the return type and the yield type of its body.
The yield type is ``swallowed'' by the coroutine,
leaving $\bot$ as the yield type of the resulting value.

Having inspected the rules for typing a coroutine declaration,
we turn to coroutine operations.
The rule \textsc{T-Start} says that
given a term of the coroutine type
$\texttt{T}_1 \overset{\texttt{T}_y}{\rightsquigarrow} \texttt{T}_2$,
and a term of type $\texttt{T}_1$,
a $\texttt{start}$ expression has the coroutine instance type
$\texttt{T}_y \leftrightsquigarrow \texttt{T}_2$.
The evaluation of the \texttt{start} term
can itself yield a value of type \texttt{T}$_w$,
independently of the yield type of the newly created coroutine instance --
this means that the yielding context of the coroutine instance
is separated from the context of the caller,
where the caller is either the body of the enclosing coroutine,
or the enclosing function, or the top-level program.

The \textsc{T-Yield} rule says that
if a term \texttt{t} has a type \texttt{T},
and its yield type is also \texttt{T},
then \texttt{yield(t)} has the type \texttt{Unit},
with the same yield type \texttt{T}.
In other words,
a coroutine can yield a value of type \texttt{T},
only if all the previous yields were of the same type \texttt{T}
(or, if there were not previous yields,
the typechecker can apply the \textsc{T-Ctx} rule).

The \textsc{T-Resume} rule describes the type of the \texttt{resume} expression.
Consider resuming a coroutine instance of type
$\texttt{T}_y \leftrightsquigarrow \texttt{T}_2$.
Depending on the state of the instance,
this has several outcomes.
First, the instance can complete and return a result value of type \texttt{T}$_2$.
Second, the resumption of the instance can suspend itself and yield a value
of type \texttt{T}$_y$.
Finally, the instance could have already been completed when \texttt{resume}
was called.
Since $\lambda_\leadsto$ does not have a variants or sum types
to distinguish between these cases,
the \texttt{resume} statement acts as a proverbial poor man's pattern matching.
The values \texttt{t}$_2$, \texttt{t}$_3$ and \texttt{t}$_4$
represent the code segments that deal with each of the above-described cases,
and they return a result value of type \texttt{T}$_R$.
Note that we could have modeled \texttt{t}$_2$, \texttt{t}$_3$ and \texttt{t}$_4$
as function values.
However, that would mean that the bodies of
\texttt{t}$_2$, \texttt{t}$_3$ and \texttt{t}$_4$
cannot yield values themselves (according to \textsc{T-Abs}).
Hence, we model \texttt{t}$_2$, \texttt{t}$_3$ and \texttt{t}$_4$
as coroutines with the yield type \texttt{T}$_w$,
which corresponds to the yield type of the enclosing context.

The \textsc{T-Snapshot} rule types the \texttt{snapshot} expression,
which copies the given coroutine instance.
The coroutine instance type and the yield type are preserved
between the premise and the conclusion.

We mentioned that this model must describe stackful coroutines.
To this end, a coroutine body must be able to invoke another coroutine
as if it were a normal function.
If that other function then yields,
both the caller and the callee must be suspended.
Note that coroutine invocation is syntactically equivalent to function application,
but the callee is a coroutine, not a function.
This is captured by the \textsc{T-AppCor} rule,
which additionally requires that the yield type \texttt{T}$_y$
of the callee corresponds to the yield type of the current coroutine.

Finally, \textsc{T-Fix} is a standard typing rule for general recursion,
which allows a coroutine or a lambda to refer to itself.
We defer the discussion of the typing rules for runtime terms,
shown in Fig. \ref{fig:runtime-typing-relation},
until we cover the operational semantics of $\lambda_\leadsto$.
Before we proceed, we define what it means for a program to be well-typed.

\begin{definition}[Well-typed program]\label{def:well-typed-term}
A term \texttt{t} is \emph{well-typed} if and only if
$\exists \texttt{T}, \texttt{T}_y$ and an instance typing $\Sigma$ such that
$\Sigma | \varnothing \vdash \texttt{t:T} | \texttt{T}_y$.
Furthermore, a term \texttt{t} is a \emph{well-typed user program}
if \texttt{t} is well-typed and its yield type $\texttt{T}_y = \bot$.
\qed
\end{definition}

Coroutine instances are stateful --
each coroutine instance maps to a term that it evaluates.
Program evaluation is modeled not only as a transition between terms,
but also between coroutine stores $\mu$, which we define next.

\begin{definition}[Coroutine store]\label{def:coroutine-store}
A \emph{coroutine store} $\mu$ is a sequence of
coroutine instance labels $i$ bound to respective evaluation terms \texttt{t},
where the comma operator ($,$) extends the coroutine store
with a new binding.

\begin{minipage}[b]{12 cm}
$\ba[t]{l@{\hspace{44mm}}r}
\mu ::= & \defmbox{instance store:} \\
\biggap \varnothing
& \defmbox{empty instance store}  \\
\biggap \mu, i \triangleright \texttt{t}
& \defmbox{instance binding}  \\
\ea$
\end{minipage}
\\*
\qed
\end{definition}

In what follows, we will use the convention
that the evaluation term \texttt{t} in the coroutine store is suspended
($\texttt{t} = \llbracket \texttt{t}_1 \rrbracket_\varnothing$)
if and only if the coroutine
is currently executing, or has completed altogether.
Otherwise, if the coroutine can be resumed,
the evaluation term \texttt{t} will not be suspended.

\begin{definition}[Well-typed coroutine store]\label{lem:well-typed-store}
A coroutine store $\mu$ is well-typed with respect to the instance typing $\Sigma$,
denoted $\Sigma \vdash \mu$,
if and only if it is true that $\forall i \in dom(\mu)$,
$\Sigma(i) = \texttt{T}_y \leftrightsquigarrow \texttt{T}_2
\Leftrightarrow
\Sigma | \varnothing \vdash \mu(i) \texttt{:T}_2 | \texttt{T}_y$,
and $dom(\Sigma) = dom(\mu)$.
\qed
\end{definition}

We next define the operational semantics of $\lambda_\leadsto$.

\begin{definition}[Transition relation]\label{def:transition-rules}
The \emph{transition relation} $\texttt{t}|\mu \rightarrow \texttt{t}'|\mu'$
is a four place relation between
the source term $\texttt{t}$ and source coroutine store $\mu$
and the target term $\texttt{t}'$ and the target coroutine store $\mu'$.
The inductive definition of the transition relation
is shown in Fig. \ref{fig:evaluation-rules}.
\qed
\end{definition}

\begin{figure}

\begin{tabular}{lll}
$E$ & ::= & [$\cdot$] | $E€(t)€$ | $€v(€E€)€$ | $€start(€E€,t)€$ | $€start(v,€E€)€$ | $€yield(€E€)€$ | $€snapshot(€E€)€$ | $€fix(€E€)€$ | \\
  &     & $€resume(€E€,t,t,t)€$ | $€resume(v,€E€,t,t)€$ | $€resume(v,v,€E€,t)€$ | $€resume(v,v,v,€E€)€$ | \\
  &     & $\langle E, €v€, €v€, €v€ \rangle_i$
\end{tabular}

\begin{tabular}{lll}
$P$ & ::= & $[\cdot]€(t)€$ | $€v(€[\cdot]€)€$ | $€start(€[\cdot]€,t)€$ | $€start(v,€[\cdot]€)€$ | $€yield(€[\cdot]€)€$ | $€snapshot(€[\cdot]€)€$ | $€fix(€[\cdot]€)€$ | \\
  &     & $€resume(€[\cdot]€,t,t,t)€$ | $€resume(v,€[\cdot]€,t,t)€$ | $€resume(v,v,€[\cdot]€,t)€$ | $€resume(v,v,v,€[\cdot]€)€$
\end{tabular}

\infrule[E-Context]
{
€t€ | \mu
\rightarrow
€t€' | \mu'
}
{
E[€t€] | \mu
\rightarrow
E[€t€'] | \mu'
}

\infax[E-Pause]{
P[\llbracket €t€ \rrbracket_\m{v}] | \mu
\rightarrow
\llbracket P[€t€] \rrbracket_\m{v} | \mu
}

\infax[E-AppAbs]{
  €((x:T)=>t)(v)€ | \mu \rightarrow [€x€ \mapsto €v€] €t€ | \mu
}

\infrule[E-Start]
{
i \not\in dom(\mu)
}
{
€start((x:T€_1€)€\overset{\m{T}_y}{\rightsquigarrow}€t,v)€ | \mu
\rightarrow
i | \mu , i \triangleright [€x€ \mapsto €v€]€t€
}

\infax[E-Yield]{
€yield(v)€ | \mu \rightarrow \llbracket €()€ \rrbracket_\m{v} | \mu
}

\infrule[\textsc{E-Snapshot}]
{
i_2 \not\in dom(\mu)
\quad
i_1 \neq i_2
}
{
€snapshot(€i_1€)€ | \mu , i_1 \triangleright €t€
\rightarrow
i_2 | \mu , i_1 \triangleright €t€ , i_2 \triangleright €t€
}

\infax[E-AppCor]{
€((x:T€_1 \overset{\m{T}_y}{\rightsquigarrow} €t€_2€)(v)€ | \mu
\rightarrow
[€x€ \mapsto €v€] €t€_2 | \mu
}

\infax[E-Fix]{
€fix((x:T€_1€)=>t€_2€)€ | \mu
\rightarrow
[€x€ \mapsto €fix((x:T€_1€)=>t€_2€)€] €t€_2 | \mu
}

\infrule[E-Resume1]
{
€t€ \neq \llbracket €t€_0 \rrbracket_\varnothing
}
{
€resume(€i€,v€_2€,v€_3€,v€_4€)€
| \mu , i \triangleright €t€
\rightarrow
\langle t, v_2, v_3, v_4 \rangle_i
| \mu , i \triangleright \llbracket t \rrbracket_\varnothing
}

\infax[E-Resume2]{
€resume(€i€,v€_2€,v€_3€,v€_4€)€
| \mu , i \triangleright \llbracket €t€_0 \rrbracket_\varnothing
\rightarrow
€v€_4€(())€
| \mu , i \triangleright \llbracket €t€_0 \rrbracket_\varnothing
}

\infax[E-Capture]{
\langle \llbracket €t€_1 \rrbracket_\m{v}€,v€_2€,v€_3€,v€_4 \rangle_i
| \mu , i \triangleright \llbracket €t€_0 \rrbracket_{\m{v}'}
\rightarrow
€v€_3€(v)€ | \mu , i \triangleright €t€_1
}

\infax[E-Terminate]{
\langle €v,v€_2€,v€_3€,v€_4 \rangle_i
| \mu , i \triangleright \llbracket €t€_0 \rrbracket_{\m{v}'}
\rightarrow
€v€_2€(v)€
| \mu , i \triangleright \llbracket €v€ \rrbracket_\varnothing
}

\caption{Transition relation}
\label{fig:evaluation-rules}
\end{figure}

For simplicity, the evaluation rules are presented with the evaluation
context $E$ and suspension context $P$. The evaluation
context $E$ is standard, and it's used in \textsc{E-Context}. The suspension
context $P$ is used in \textsc{E-Pause} to simplify evaluations like the following:

\infax{
\llbracket \texttt{t}_1 \rrbracket_\texttt{v} \texttt{(t}_2\texttt{)} | \mu
\rightarrow
\llbracket \texttt{t}_1\texttt{(t}_2\texttt{)} \rrbracket_\texttt{v} | \mu \quad
}

\infax{
\texttt{t}_1 \texttt{(} \llbracket \texttt{t}_2 \rrbracket_\texttt{v} \texttt{)} | \mu
\rightarrow
\llbracket \texttt{t}_1\texttt{(t}_2\texttt{)} \rrbracket_\texttt{v} | \mu
}

\infax{
\texttt{yield(} \llbracket \texttt{t} \rrbracket_\texttt{v} \texttt{)} | \mu
\rightarrow
\llbracket \texttt{yield(t)} \rrbracket_\texttt{v} | \mu
}

We briefly discuss the evaluation rules in Fig. \ref{fig:evaluation-rules}.
The rule \textsc{E-AppAbs} is standard in lambda calculus.
The only difference in our case is the addition
of the coroutine store, which does not change in \textsc{E-AppAbs}.

The rule \textsc{E-Start} reduces a \texttt{start} expression
with a coroutine and its argument to a coroutine instance
with a fresh label $i$,
and adds a binding from $i$ to the coroutine body \texttt{t}
in which occurrences of \texttt{x} are replaced with the argument \texttt{v}.

The rule \textsc{E-Yield} reduces the \texttt{yield} expression
to a \emph{suspension} of the \texttt{Unit} value \texttt{()},
which has a pending yield of the value \texttt{v}.
A term suspension is a runtime term of the form
$\llbracket \texttt{t} \rrbracket_\texttt{v}$,
where \texttt{t} represents the suspended computation,
and \texttt{v} is the value that is about to be yielded.
Once reduced from a \texttt{yield} expression,
a suspension spreads through the program until reaching the limits
of the enclosing coroutine resumption.
To model this, we need to introduce evaluation rules
that suspend all term shapes.
The rule \textsc{E-Pause} exists for this purpose.


The expanding suspension is captured once it reaches the
\emph{coroutine resumption term}
$\langle \llbracket \texttt{t} \rrbracket_\texttt{v},
\texttt{v}_2,
\texttt{v}_3,
\texttt{v}_4 \rangle_i$,
as described by the rule \textsc{E-Capture}.
The suspended execution term \texttt{t}
from the suspension $\llbracket \texttt{t} \rrbracket_\texttt{v}$
is placed into the coroutine store binding of the coroutine instance $i$,
and the yielded value \texttt{v} is passed
to the yield-handling function \texttt{v}$_3$.

Consider the outcomes of resuming a coroutine.
If a coroutine instance $i$ is not terminated and not currently executing
(that is, the binding for $i$ in the coroutine store does not point
to a suspended term $\llbracket \texttt{t}_0 \rrbracket_\varnothing$),
then a \texttt{resume} expression reduces
to a coroutine resumption term, by the rule \textsc{E-Resume1}.
It is illegal to reduce a coroutine instance that is already completed
or currently executing.
A \texttt{resume} expression on such a coroutine instance
reduces to an application of the fourth argument to a \texttt{Unit} term,
by the rule \textsc{E-Resume2}.

The rule \textsc{E-Terminate} states that
if the term \texttt{t} reduces to a value \texttt{v},
the coroutine resumption reduces to an application
of the second argument \texttt{v}$_2$ to the reduced value \texttt{v},
leaving the binding for the instance $i$
mapped to a suspended state $\llbracket \texttt{v} \rrbracket_\varnothing$.

Given an existing coroutine instance $i_1$,
the \textsc{E-Snapshot} rule reduces
a \texttt{snapshot(}$i_1$\texttt{)} expression
to a fresh coroutine instance $i_2$,
and adds a copy of the $i_1$'s term \texttt{t} to the store as a binding for $i_2$.
Finally,
note that $\lambda_\leadsto$ models stackful delimited coroutines,
so a coroutine application must be allowed inside the body of a coroutine.
This is shown in the rule \textsc{E-AppCor},
which essentially describes beta reduction on coroutines.

\begin{figure}
\input{typing-rules-runtime.tex}
\caption{Typing relation on runtime terms}
\label{fig:runtime-typing-relation}
\end{figure}

Now that we saw how normal terms reduce to runtime terms,
we can inspect the remaining typing derivations,
shown in Fig. \ref{fig:runtime-typing-relation}.
A suspension must have the same type \texttt{T} as the term \texttt{t} is suspends,
and the yield type \texttt{T}$_y$ that corresponds to the yielded value,
as stated by the rule \textsc{T-Suspension}.
The rule \textsc{T-Instance}
states that the type of the coroutine instance $i$ is
$\texttt{T}_y \leftrightsquigarrow \texttt{T}_2$,
under the assumption that the instance typing $\Sigma$
contains the corresponding binding for $i$.
\textsc{T-Empty} states that one can assume that the empty term has any type.
Finally, the rule \textsc{T-Resumption}
assigns a type to a coroutine resumption term,
and is a direct equivalent of the rule \textsc{T-Resume}
from Fig. \ref{fig:typing-relation}.

With the typing rules and the operational semantics in place,
we can prove the basic safety properties of the $\lambda_\leadsto$ model --
progress and preservation.
We start by establishing several helper lemmas.

\begin{lemma}[Inversion of the typing relation]\label{lem:inversion}
\begin{enumerate}
\item
If $\Sigma | \Gamma \vdash \texttt{():T} | \texttt{T}_y$,
then $\texttt{T} = \texttt{Unit}$.
\item
If $\Sigma | \Gamma \vdash \texttt{x:T} | \texttt{T}_y$,
then $\texttt{x:T} \in \Gamma$.
\item
If $\Sigma | \Gamma \vdash \texttt{(x:T}_1\texttt{)=>t}_2\texttt{:T} | \texttt{T}_y$,
then $\exists \texttt{T}_2, \texttt{T} = \texttt{T}_1 \texttt{=>} \texttt{T}_2$,
$\Sigma | \Gamma , \texttt{x:T}_1 \vdash \texttt{t}_2\texttt{:T}_2 | \bot$.
\item
If $\Sigma | \Gamma \vdash \texttt{t}_1\texttt{(t}_2\texttt{):T} | \texttt{T}_y$,
then either
$\exists \texttt{T}_2$
such that
$\Sigma | \Gamma \vdash \texttt{t}_1\texttt{:T}_1\texttt{=>T} | \texttt{T}_y$
and $\Sigma | \Gamma \vdash \texttt{t}_2\texttt{:T}_2 | \texttt{T}_y$,
or $\exists \texttt{T}_2$ such that
$\Sigma | \Gamma \vdash \texttt{t}_1
\texttt{:T}_1 \overset{\texttt{T}_y}{\rightsquigarrow} \texttt{T} | \texttt{T}_y$
and $\Sigma | \Gamma \vdash \texttt{t}_2\texttt{:T}_2 | \texttt{T}_y$.
\item
If $\Sigma | \Gamma \vdash
\texttt{(x:T}_1\texttt{)}\overset{\texttt{T}_y}{\rightsquigarrow}\texttt{t}_2
\texttt{:T} | \texttt{T}_w$,
then $\exists \texttt{T}_2,
\texttt{T} = \texttt{T}_1\overset{\texttt{T}_y}{\rightsquigarrow}\texttt{T}_2$,
$\Sigma | \Gamma , \texttt{x:T}_1 \vdash \texttt{t:T}_2 | \texttt{T}_y$.
\item
If $\Sigma | \Gamma \vdash \texttt{start(t}_1\texttt{,t}_2\texttt{):T} | \texttt{T}_w$,
then $\exists \texttt{T}_1, \texttt{T}_2, \texttt{T}_y$ such that
$\texttt{T} = \texttt{T}_y \leftrightsquigarrow \texttt{T}_2$,
$\Sigma | \Gamma \vdash \texttt{t}_1
\texttt{:T}_1 \overset{\texttt{T}_y}{\rightsquigarrow} \texttt{T}_2 | \texttt{T}_w$,
and $\Sigma | \Gamma \vdash \texttt{t}_2 \texttt{:T}_2 | \texttt{T}_w$.
\item
If $\Sigma | \Gamma \vdash \texttt{yield(t):T} | \texttt{T}_y$,
then $\texttt{T} = \texttt{Unit}$
and $\Sigma | \Gamma \vdash \texttt{t}\texttt{:T}_y | \texttt{T}_y$.
\item
If $\Sigma | \Gamma \vdash \texttt{snapshot(t):T} | \texttt{T}_w$,
then $\exists \texttt{T}_2, \texttt{T}_y$ such that
$\texttt{T} = \texttt{T}_y \leftrightsquigarrow \texttt{T}_2$,
and $\Sigma | \Gamma \vdash \texttt{t}
\texttt{:T}_y \leftrightsquigarrow \texttt{T}_2 | \texttt{T}_y$.
\item
If $\Sigma | \Gamma \vdash
\texttt{resume(t}_1\texttt{,t}_2\texttt{,t}_3\texttt{,t}_4\texttt{):T} | \texttt{T}_w$,
then $\exists \texttt{T}_2, \texttt{T}_y$ such that the following holds:
$\Sigma | \Gamma \vdash \texttt{t}_1
\texttt{:T}_y \leftrightsquigarrow \texttt{T}_2 | \texttt{T}_w$,
and
$\Sigma | \Gamma \vdash \texttt{t}_2 \texttt{:T}_2
\overset{\texttt{T}_w}{\rightsquigarrow} \texttt{T}_R
| \texttt{T}_w$,
and
$\Sigma | \Gamma \vdash \texttt{t}_3 \texttt{:T}_y
\overset{\texttt{T}_w}{\rightsquigarrow} \texttt{T}_R
| \texttt{T}_w$,
and $\Sigma | \Gamma \vdash \texttt{t}_4 \texttt{:Unit}
\overset{\texttt{T}_w}{\rightsquigarrow} \texttt{T}_R
| \texttt{T}_w$.
\item
If $\Sigma | \Gamma \vdash \texttt{fix(t):T} | \texttt{T}_w$,
then $\Sigma | \Gamma \vdash \texttt{t:T=>T} | \bot$.
\item
If $\Sigma | \Gamma \vdash \llbracket \texttt{t} \rrbracket_\texttt{v}
\texttt{:T} | \texttt{T}_y$,
then $\Sigma | \Gamma \vdash \texttt{t:T} | \texttt{T}_y$,
and $\Sigma | \Gamma \vdash \texttt{v:T}_y | \bot$.
\item
If $\Sigma | \Gamma \vdash i \texttt{:T} | \texttt{T}_w$,
then $\exists \texttt{T}_y, \texttt{T}_2$ such that
$i \texttt{:T}_y \leftrightsquigarrow \texttt{T}_2 \in \Sigma$.
\item
If $\Sigma | \Gamma \vdash
\langle \texttt{t}_1\texttt{,v}_2\texttt{,v}_3\texttt{,v}_4 \rangle_i
\texttt{:T}_R | \texttt{T}_w$,
then $\exists \texttt{T}_2, \texttt{T}_y$ such that the following holds:
$\Sigma | \Gamma \vdash \texttt{t}_1
\texttt{:T}_y \leftrightsquigarrow \texttt{T}_2 | \texttt{T}_w$,
$\Sigma | \Gamma \vdash \texttt{v}_2 \texttt{:T}_2
\overset{\texttt{T}_w}{\rightsquigarrow} \texttt{T}_R
| \texttt{T}_w$,
$\Sigma | \Gamma \vdash \texttt{v}_3 \texttt{:T}_y
\overset{\texttt{T}_w}{\rightsquigarrow} \texttt{T}_R
| \texttt{T}_w$,
and $\Sigma | \Gamma \vdash \texttt{v}_4 \texttt{:Unit}
\overset{\texttt{T}_w}{\rightsquigarrow} \texttt{T}_R
| \texttt{T}_w$.
\end{enumerate}
\end{lemma}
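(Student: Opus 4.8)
The plan is to prove each clause by induction on the derivation of the typing judgement appearing in its hypothesis, case-splitting on the last rule applied. Since the subject term is fixed in each clause, only a few rules can have concluded it: the structural rule whose conclusion matches that term's shape, together with \textsc{T-Ctx}, the one rule applicable to an \emph{arbitrary} user term. The sole place where two structural rules compete is the application form $\texttt{t}_1\texttt{(t}_2\texttt{)}$, which is the conclusion of both \textsc{T-App} and \textsc{T-AppCor} --- this is exactly why clause~4 is phrased as a disjunction. The runtime-term clauses (11--14) are the easiest: suspensions, instances and resumptions are each the conclusion of a single dedicated rule, so inversion there is just reading off the premises.

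When the last rule is the structural one matching the subject, the required witnesses are the metavariables of that rule instance and the required sub-judgements are its premises (up to the trivial matching between \textsc{T-Resume} and \textsc{T-Resumption}), so there is nothing to do but unpack. The interesting case is \textsc{T-Ctx}: its premise is $\Sigma | \Gamma \vdash \texttt{t:T} | \bot$ for the \emph{same} term $\texttt{t}$, the same $\Sigma$, $\Gamma$ and type $\texttt{T}$, only with yield type $\bot$. Applying the induction hypothesis (the clause being proved) to this shorter derivation produces the claimed decomposition, but with yield type $\bot$ throughout. For the clauses whose conclusions never refer to the ambient yield type --- unit~(1), variable~(2), abstraction~(3), coroutine~(5), fix~(10), instance~(12), where the sub-judgements are pinned to $\bot$ or speak only about $\Sigma$ --- this already closes the case. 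For clauses 4, 6, 8 and 9 the sub-judgements delivered by the induction hypothesis carry yield type $\bot$, and each is raised to the ambient yield type $\texttt{T}_y$ by one further application of \textsc{T-Ctx}, which is sound precisely because those sub-judgements hold at $\bot$.

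The step I expect to be the real obstacle is the \textsc{T-Ctx} case of clause~7 (\textsc{T-Yield}). Here \textsc{T-Yield} ties the type of the body to its yield type, so a derivation of $\Sigma | \Gamma \vdash \texttt{yield(t):Unit} | \bot$ forces $\Sigma | \Gamma \vdash \texttt{t:}\bot | \bot$, whereas the clause demands $\Sigma | \Gamma \vdash \texttt{t:T}_y | \texttt{T}_y$ with the \emph{type} widened as well; re-applying \textsc{T-Ctx} widens the yield type but not the type. Closing this case cleanly needs the bottom type to support subsumption --- from $\Sigma | \Gamma \vdash \texttt{t:}\bot | \bot$ one must be able to derive $\Sigma | \Gamma \vdash \texttt{t:T}_y | \texttt{T}_y$ --- which is not among the rules as presented, so I would either add this as an explicit structural property of $\bot$ and prove the lemma relative to it, or refine the statement of clause~7 accordingly. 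Apart from that, the only thing to watch is the exhaustiveness of each case split --- checking, form by form, that no rule other than the expected structural one(s) and \textsc{T-Ctx} can conclude the given shape --- after which everything is routine rule-unpacking.
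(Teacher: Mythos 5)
Your overall strategy---induction on the derivation, observing that for each term shape only its one syntax-directed rule (two for application) and \textsc{T-Ctx} can conclude the judgement---is exactly the argument the paper leaves implicit behind its one-line proof, and your observation about clause~7 is genuine: from the \textsc{T-Ctx} premise one only gets $\Sigma | \Gamma \vdash \texttt{t:}\bot | \bot$, and no rule lets you re-type \texttt{t} at $\texttt{T}_y$, so that case needs either a $\bot$-subsumption principle or a weakened statement (e.g.\ a coroutine body such as $\texttt{fix((x:}\bot\texttt{)=>x)}$ wrapped in \texttt{yield} and then lifted by \textsc{T-Ctx} is a concrete failure). The gap in your proposal is that you localize this obstacle to clause~7 only. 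The same phenomenon breaks your treatment of the \textsc{T-Ctx} case wherever the ambient yield type occurs \emph{inside a type} in the required sub-judgements, and there your ``one further application of \textsc{T-Ctx}'' does not suffice, because \textsc{T-Ctx} changes only the yield component of a judgement, never a type annotation. Concretely: in clause~4's second disjunct the conclusion demands $\Sigma | \Gamma \vdash \texttt{t}_1 \texttt{:T}_1 \overset{\texttt{T}_y}{\rightsquigarrow} \texttt{T} | \texttt{T}_y$, but a \textsc{T-AppCor} derivation under \textsc{T-Ctx} only yields a callee of type $\texttt{T}_1 \overset{\bot}{\rightsquigarrow} \texttt{T}$ (take a coroutine annotated with yield type $\bot$, apply it, then lift with \textsc{T-Ctx}); in clause~9 the handlers must have types $\texttt{T}_2 \overset{\texttt{T}_w}{\rightsquigarrow} \texttt{T}_R$ etc., while the \textsc{T-Ctx} case only provides them at annotation $\bot$; clause~13 is analogous.

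A second, smaller oversight: \textsc{T-Ctx} applies to \emph{arbitrary} terms, including the runtime terms, so clauses 11--13 are not ``just reading off the premises'' of a single dedicated rule. For clause~12 the extra case is harmless, but for clause~11 the \textsc{T-Ctx} case requires promoting $\Sigma | \Gamma \vdash \texttt{v:}\bot | \bot$ to $\Sigma | \Gamma \vdash \texttt{v:T}_y | \bot$, which is again the clause-7 obstacle. So your diagnosis is right but incomplete: whatever repair you adopt (a subsumption property for $\bot$, or restating the affected clauses so that the yield type and the coroutine-type annotations are only required up to the $\bot$ case) must be applied uniformly to clauses 4, 7, 9, 11 and 13, not just 7; with that done, the rest of your case analysis is the standard unpacking the paper intends.
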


\begin{proof}
Follows immediately from the typing derivations in \ref{def:typing-rules}.
\end{proof}

\begin{lemma}[Canonical forms]\label{lem:canonical}
\begin{enumerate}
\item
If \texttt{v} is a value of type \texttt{Unit}, then \texttt{v} is \texttt{()}.
\item
If \texttt{v} is a value of type $\texttt{T}_1 \texttt{=>} \texttt{T}_2$,
then $\texttt{v} = \texttt{(x:T}_1\texttt{)=>t}_2$.
\item
If \texttt{v} is a value of type
$\texttt{T}_1 \overset{\texttt{T}_y}{\rightsquigarrow} \texttt{T}_2$,
then $\texttt{v} =
\texttt{(x:T}_1\texttt{)} \overset{\texttt{T}_y}{\rightsquigarrow} \texttt{t}_2$.
\item
If \texttt{v} is a value of type
$\texttt{T}_y \leftrightsquigarrow \texttt{T}_2$,
then $\texttt{v} = i$, where $i$ is an instance label.
\end{enumerate}
\end{lemma}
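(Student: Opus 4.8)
The plan is to prove all four clauses by a single case analysis on the syntactic shape of \texttt{v}, with no induction required. By the value grammar of Definition~\ref{def:syntax}, a value \texttt{v} is one of exactly five forms: an abstraction \texttt{(x:T)=>t}, the unit value \texttt{()}, a coroutine \texttt{(x:T)}$\rightsquigarrow$\texttt{t} (carrying a yield annotation), a coroutine instance label $i$, or the empty term $\varnothing$. The four ``interesting'' type constructors --- function \texttt{=>}, coroutine $\rightsquigarrow$, coroutine instance $\leftrightsquigarrow$, and \texttt{Unit} --- are pairwise distinct syntactic forms, and the Inversion Lemma (Lemma~\ref{lem:inversion}) pins the type of each value form to exactly one of them; this is what makes the case split decisive.

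Concretely, for clause~1 I would assume \texttt{v:Unit} and eliminate the non-unit forms: by Lemma~\ref{lem:inversion}(3) an abstraction can only have a function type, by Lemma~\ref{lem:inversion}(5) a coroutine can only have a coroutine type, and by Lemma~\ref{lem:inversion}(12) --- equivalently, because \textsc{T-Instance} is the only typing rule applicable to an instance label --- an instance label can only have an instance type; none of these is \texttt{Unit}, so \texttt{v}~=~\texttt{()}. Clauses 2, 3 and 4 are handled symmetrically, using in each case the relevant part of Lemma~\ref{lem:inversion} both to show the target type forces the claimed shape and to rule out the other three non-empty forms (for clause~2, a function type excludes \texttt{()} by Lemma~\ref{lem:inversion}(1), a coroutine by (5), and an instance label by (12); and so on). Note that a typing derivation for a value may end in \textsc{T-Ctx}, but that rule leaves the \emph{type} unchanged, so it does not affect the argument --- and this is already accounted for in Lemma~\ref{lem:inversion}, whose clauses are stated for an arbitrary yield type.

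The one point that genuinely needs care --- the main obstacle --- is the empty term $\varnothing$, which is a value but, by rule \textsc{T-Empty}, can be assigned \emph{any} type \texttt{T}. Read literally, $\varnothing$ is a counterexample to all four clauses. The resolution I would adopt is to observe that $\varnothing$ is a degenerate runtime term that appears in no source program and on the right-hand side of no transition rule, so it never arises as the scrutinee of canonical forms in the progress proof; equivalently, one should read each clause with the implicit side condition \texttt{v}~$\neq$~$\varnothing$. With that understood, the lemma follows immediately from the value grammar together with Lemma~\ref{lem:inversion}, and no further work is needed.
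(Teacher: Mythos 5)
Your proposal follows essentially the same route as the paper's proof: a case analysis on the value forms of Definition~\ref{def:syntax}, using Lemma~\ref{lem:inversion} to show that each of the four types pins down exactly one value shape and excludes the others. Your extra care about the empty term $\varnothing$ --- which \textsc{T-Empty} lets inhabit any type, so the clauses strictly need the implicit side condition $\texttt{v} \neq \varnothing$ (or the observation that $\varnothing$ never occurs where canonical forms is invoked) --- is a caveat the paper's own proof silently passes over, so flagging and resolving it is a small improvement rather than a deviation.
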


\begin{proof}
We consider the possible forms of values as per syntax from \ref{def:syntax},
and rely on \ref{lem:inversion} to prove the claim.

For example, the value \texttt{()} immediately satisfies the claim,
by \textsc{T-Unit}.
From \ref{lem:inversion}, we see that other types of values
(functions, coroutines and coroutine instances)
never have the type \texttt{Unit}.
The remaining cases are proved in a similar way.
\end{proof}

We can now state the progress property of $\lambda_\leadsto$.

\begin{theorem}[Progress]\label{thm:progress}
Suppose that \texttt{t} is a closed, well-typed term
for some \texttt{T} and $\Sigma$,
as defined in \ref{def:well-typed-term}.
Then, either \texttt{t} is a value,
or \texttt{t} is a suspension term
$\llbracket \texttt{t} \rrbracket_{\texttt{v}}$,
or, for any store store $\mu$ such that $\Sigma \vdash \mu$,
there is some term $\texttt{t}'$ and store $\mu'$ such that
$\texttt{t} | \mu \rightarrow \texttt{t}' | \mu'$.
\end{theorem}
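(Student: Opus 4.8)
The plan is to proceed by induction on the derivation of $\Sigma \mid \varnothing \vdash \texttt{t:T} \mid \texttt{T}_y$ --- equivalently, by structural induction on $\texttt{t}$, using Lemma~\ref{lem:inversion} to invert each typing judgement and Lemma~\ref{lem:canonical} to expose the shape of well-typed values. Several cases are immediate. If the last rule is \textsc{T-Unit}, \textsc{T-Abs}, \textsc{T-Coroutine}, \textsc{T-Instance} or \textsc{T-Empty}, then $\texttt{t}$ is already a value; if it is \textsc{T-Suspension}, then $\texttt{t}$ is a suspension term $\llbracket \texttt{t}_1 \rrbracket_{\texttt{v}}$; if it is \textsc{T-Ctx}, the conclusion follows directly from the induction hypothesis applied to its (same-term) premise; and \textsc{T-Var} cannot arise because $\texttt{t}$ is closed. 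This leaves the elimination and compound forms: \textsc{T-App}, \textsc{T-AppCor}, \textsc{T-Yield}, \textsc{T-Start}, \textsc{T-Snapshot}, \textsc{T-Fix}, \textsc{T-Resume} and \textsc{T-Resumption}.

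For each compound form I would apply the induction hypothesis to its immediate subterms in left-to-right order. (For a $\texttt{resume}$ term only the first subterm needs to be examined, since the syntax forces the remaining three arguments to be values; likewise for a resumption term $\langle \texttt{t}_1,\texttt{v}_2,\texttt{v}_3,\texttt{v}_4 \rangle_i$.) If the leftmost not-yet-value subterm $\texttt{t}_1$ takes a step $\texttt{t}_1 \mid \mu \rightarrow \texttt{t}_1' \mid \mu'$, then $\texttt{t}$ steps by \textsc{E-Context}, using the evaluation context $E$ obtained by plugging the hole into the corresponding position --- every such position occurs in the grammar of $E$. If $\texttt{t}_1$ is instead a suspension $\llbracket \texttt{t}_1' \rrbracket_{\texttt{v}}$, then for every position except inside a resumption term the same position occurs in the suspension context $P$, so $\texttt{t}$ steps by \textsc{E-Pause}. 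Finally, once the relevant subterms are all values, Lemma~\ref{lem:canonical} together with the pertinent clause of Lemma~\ref{lem:inversion} (which fixes the type of each value) determines their shapes, and the head redex fires: \textsc{E-AppAbs} or \textsc{E-AppCor} for an application according to whether the operator is a $\lambda$ or a coroutine literal; \textsc{E-Yield} and \textsc{E-Fix} for their respective forms; \textsc{E-Start} for $\texttt{start}(\texttt{v}_1,\texttt{v}_2)$ with $\texttt{v}_1$ a coroutine literal; \textsc{E-Snapshot} for $\texttt{snapshot}(i_1)$; and \textsc{E-Resume1} or \textsc{E-Resume2} for $\texttt{resume}(i,\texttt{v}_2,\texttt{v}_3,\texttt{v}_4)$ according to whether $\mu(i)$ has the form $\llbracket \texttt{t}_0 \rrbracket_\varnothing$. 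The store side-conditions are routine: a fresh instance label for \textsc{E-Start} and \textsc{E-Snapshot} exists because $dom(\mu)$ is finite, and whenever a rule must inspect the store binding of an instance $i$ occurring in $\texttt{t}$, the relevant clause of Lemma~\ref{lem:inversion} gives $i \in dom(\Sigma)$, while $\Sigma \vdash \mu$ gives $dom(\Sigma) = dom(\mu)$, so $i \in dom(\mu)$ and $\mu$ may be written with that binding exposed.

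The case I expect to require the most care is the runtime coroutine-resumption term $\texttt{t} = \langle \texttt{t}_1,\texttt{v}_2,\texttt{v}_3,\texttt{v}_4 \rangle_i$ (rule \textsc{T-Resumption}). This position appears in $E$ but deliberately \emph{not} in $P$, so when $\texttt{t}_1$ is a suspension we cannot lift it out with \textsc{E-Pause}; instead \textsc{E-Capture} applies, and when $\texttt{t}_1$ is a value \textsc{E-Terminate} applies. Both of these rules, however, demand that the store binding $\mu(i)$ itself be a suspension $\llbracket \texttt{t}_0 \rrbracket_{\texttt{v}'}$, and this is not implied by $\Sigma \vdash \mu$ alone --- it is exactly the store convention recorded after Definition~\ref{def:coroutine-store}, namely that an instance that is currently executing (which is precisely the situation in which a live resumption term refers to it) has a suspended binding. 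Formally this is an invariant of the transition relation: \textsc{E-Resume1} installs $i \triangleright \llbracket \texttt{t} \rrbracket_\varnothing$ at the very moment it creates the resumption term, and every other rule preserves the property; strictly, it belongs among the well-formedness conditions on configurations and is re-established as part of preservation. Granting it, Lemma~\ref{lem:inversion} gives $i \in dom(\mu)$ and the convention gives $\mu(i) = \llbracket \texttt{t}_0 \rrbracket_{\texttt{v}'}$, so one of \textsc{E-Context}, \textsc{E-Capture}, \textsc{E-Terminate} always applies and $\texttt{t}$ steps. This invariant bookkeeping aside, every remaining case is a mechanical unfolding of the inversion and canonical-forms lemmas against the grammars of $E$ and $P$.
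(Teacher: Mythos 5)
Your proof follows essentially the same route as the paper's: case analysis on the typing derivation, reducing subterm progress to \textsc{E-Context}/\textsc{E-Pause}, and using the canonical-forms and inversion lemmas to fire the appropriate head rule in each case. If anything you are more careful than the paper: your use of $dom(\Sigma)=dom(\mu)$ in the \texttt{resume} case, and your explicit observation that \textsc{E-Capture}/\textsc{E-Terminate} additionally need the store-convention invariant $\mu(i)=\llbracket \texttt{t}_0 \rrbracket_{\texttt{v}'}$ (not implied by $\Sigma \vdash \mu$ alone), address points the paper's proof passes over silently.
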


\begin{proof}
Since \texttt{t} is well-typed, we proceed casewise on the typing derivations.
Cases \textsc{T-Abs}, \textsc{T-Unit}, \textsc{T-Coroutine}
and \textsc{T-Instance} follow directly,
since \texttt{t} is a value.
Cases \textsc{T-Var}, \textsc{T-Ctx}, \textsc{T-Suspension} and \textsc{T-Empty} are trivial.

For the sake of simplicity, in most of the following cases when we use the induction hypothesis,
we only consider the case where a subterm is a value,
ignoring the case where the subterm is a suspension ($\llbracket \texttt{t} \rrbracket_{\texttt{v}}$),
and the case where the subterm can take a step.
This is valid because of the rule \textsc{E-Pause} and \textsc{E-Context}, which can
be used to make $t$ take one step when the subterm is not a value.

\proofcase{T-App}
$€t€ = €t€_1€(t€_2€)€ $

We only consider the case where both $€t€_1$ and $€t€_2$ are values,
then \ref{lem:canonical} tells us
that $€t€_1 = €(x:T€_2€)=>t€_{11}$,
so \textsc{E-AppAbs} applies to \texttt{t}.

\proofcase{T-Start}
$€t€ = €start(t€_1€,t€_2€)€ $

Similar to \textsc{T-App}, but use \textsc{E-Start}.

\proofcase{T-Yield}
$€t€ = €yield(t€_1€)€ $

Similar to \textsc{T-App},
but we rely on on \textsc{E-Yield}.

\proofcase{T-Snapshot}
$€t€ = €snapshot(t€_1€)€ $

Similar to \textsc{T-Yield},
but use \textsc{E-Snapshot}.

\proofcase{T-Resume}
$€t€ = €resume(t€_1€,t€_2€,t€_3€,t€_4€)€ $

Only consider the case where all subterms are values.
By \ref{lem:canonical}, €t€$_1$ must be an instance label $i$.
In that case,
there exists a store $\mu$ that contains a binding for $i$,
such that either \textsc{E-Resume1} or \textsc{E-Resume2} applies.

\proofcase{T-AppCor}
$€t€ = €t€_1€(t€_2€)€ $

Similar to \textsc{T-App}, but use \textsc{E-AppCor}.

\proofcase{T-Fix}
$€t€ = €fix(t€_1€)€ $

Similar to \textsc{T-App}, but use \textsc{E-Fix}.

\proofcase{T-Resumption}
$€t€ = \langle €t€_1€,v€_2€,v€_3€,v€_4 \rangle_i$

If \texttt{t}$_1$ is a value, then \textsc{E-Terminate} applies.
If \texttt{t}$_1$ is a suspension, then \textsc{E-Capture} applies.
Otherwise, \texttt{t}$_1$ reduces by the induction hypothesis,
so \texttt{t} reduces by \textsc{E-Context}.

\end{proof}

Before we prove that types are preserved during evaluation,
we state several standard helper lemmas.

\begin{lemma}[Permutation]\label{lem:permutation}
Assume that $\Sigma | \Gamma \vdash \texttt{t:T} | \texttt{T}_y$,
and that $\Sigma'$ and $\Gamma'$ are permutations of
$\Sigma$ and $\Gamma$, respectively.
Then $\Sigma' | \Gamma' \vdash \texttt{t:T} | \texttt{T}_y$.
\end{lemma}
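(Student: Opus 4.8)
The plan is to prove the Permutation Lemma by a straightforward induction on the derivation of the typing judgement $\Sigma | \Gamma \vdash \texttt{t:T} | \texttt{T}_y$. The key observation is that every typing rule in Fig.~\ref{fig:typing-relation} and Fig.~\ref{fig:runtime-typing-relation} accesses $\Sigma$ and $\Gamma$ only through membership queries --- $\texttt{x:T} \in \Gamma$ in \textsc{T-Var}, $\Sigma(i) = \texttt{T}_y \leftrightsquigarrow \texttt{T}_2$ (equivalently $i\texttt{:}\texttt{T}_y \leftrightsquigarrow \texttt{T}_2 \in \Sigma$) in \textsc{T-Instance} and \textsc{T-Resumption} --- or through extension by a single binding on the right, as in \textsc{T-Abs}, \textsc{T-Coroutine}. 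Since membership in a sequence is invariant under permutation, and since permuting $\Gamma$ and then appending $\texttt{x:T}_1$ yields a permutation of $\Gamma, \texttt{x:T}_1$, the induction goes through rule by rule.

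Concretely, I would set up the induction on the height of the derivation and proceed casewise on the last rule applied. For the axiom-like rules \textsc{T-Unit}, \textsc{T-Empty}, \textsc{T-Var}, \textsc{T-Instance}, the conclusion holds for $\Sigma', \Gamma'$ directly because the side conditions (if any) are membership facts preserved by permutation. For the rules with premises --- \textsc{T-App}, \textsc{T-Ctx}, \textsc{T-Start}, \textsc{T-Yield}, \textsc{T-Snapshot}, \textsc{T-Fix}, \textsc{T-Resume}, \textsc{T-AppCor}, \textsc{T-Suspension}, \textsc{T-Resumption} --- the contexts in the premises are identical to those in the conclusion, so the induction hypothesis applies to each premise with the same $\Sigma', \Gamma'$, and reassembling with the same rule gives the result. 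For \textsc{T-Abs} and \textsc{T-Coroutine}, the premise is typed under $\Gamma, \texttt{x:T}_1$; here I note that $\Gamma', \texttt{x:T}_1$ is a permutation of $\Gamma, \texttt{x:T}_1$, apply the induction hypothesis with that permuted context (and $\Sigma'$), and conclude.

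I do not expect any genuine obstacle: the lemma is entirely structural and none of the coroutine-specific machinery interacts with the ordering of bindings. The one point requiring a word of care is the treatment of $\Sigma$ in \textsc{T-Resumption} and \textsc{T-Instance}, where the side condition is phrased as $\Sigma(i) = \dots$; I would simply remark that this is shorthand for a membership assertion $i\texttt{:}\dots \in \Sigma$ (the instance typing being a sequence per Definition~\ref{def:instance-typing}, with the convention that it is treated set-theoretically for lookup), which is permutation-invariant. A second minor bookkeeping point is that the empty context cases are vacuous since the only permutation of $\varnothing$ is $\varnothing$ itself. With these remarks in place the proof is a routine rule-by-rule verification, and I would present it at that level of detail rather than writing out all fourteen cases.
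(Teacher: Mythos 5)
Your proposal is correct and matches the paper's approach: the paper proves this lemma by exactly the same straightforward induction on the typing derivation, which it does not even spell out case by case. Your additional remarks (treating $\Sigma(i)=\dots$ as a membership condition, and permuting extended contexts in \textsc{T-Abs}/\textsc{T-Coroutine}) are sound and only make the argument more explicit than the paper's.
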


\begin{proof}
The proof is a straightforward induction on the typing rules.
\end{proof}

\begin{lemma}[Weakening]\label{lem:weakening}
Assume that $\Sigma | \Gamma \vdash \texttt{t:T} | \texttt{T}_y$.
Then it holds that for any $\Sigma' \supseteq \Sigma, \Gamma' \supseteq \Gamma$,
$\Sigma' | \Gamma' \vdash \texttt{t:T} | \texttt{T}_y$.
\end{lemma}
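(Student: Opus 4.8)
The plan is to proceed by structural induction on the derivation of $\Sigma | \Gamma \vdash \texttt{t:T} | \texttt{T}_y$, showing in each case that the very same typing rule can be reapplied with the enlarged contexts $\Sigma'$ and $\Gamma'$. Since $\Sigma' \supseteq \Sigma$ and $\Gamma' \supseteq \Gamma$ mean that every binding present in the smaller context is also present in the larger one, and since Lemma~\ref{lem:permutation} lets us disregard the order in which bindings occur, the argument reduces to checking that no side-condition of any rule is invalidated by adding bindings.

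The leaf rules are immediate. \textsc{T-Unit} and \textsc{T-Empty} impose no constraint on the contexts, so they apply verbatim. \textsc{T-Var} requires $\texttt{x:T} \in \Gamma$, which still holds since $\Gamma \subseteq \Gamma'$. \textsc{T-Instance}, and likewise the instance-typing side-condition $\Sigma(i) = \texttt{T}_y \leftrightsquigarrow \texttt{T}_2$ appearing in \textsc{T-Resumption}, are preserved because $\Sigma \subseteq \Sigma'$ and $\Sigma$ assigns at most one type to each label, so $\Sigma'(i)$ agrees with $\Sigma(i)$.

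For every rule whose premises are typing judgements over the same contexts as the conclusion — namely \textsc{T-App}, \textsc{T-AppCor}, \textsc{T-Start}, \textsc{T-Yield}, \textsc{T-Snapshot}, \textsc{T-Fix}, \textsc{T-Resume}, \textsc{T-Ctx}, \textsc{T-Suspension}, and the typing premises of \textsc{T-Resumption} — I would simply apply the induction hypothesis to each premise (with the same $\Sigma'$, $\Gamma'$) and then reassemble the conclusion by the original rule; nothing in the side-conditions changes.

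The only cases needing care are \textsc{T-Abs} and \textsc{T-Coroutine}, which type the body under the extended context $\Gamma, \texttt{x:T}_1$. Here I would apply the induction hypothesis with $\Gamma', \texttt{x:T}_1$, which is a superset of $\Gamma, \texttt{x:T}_1$, and then reapply the rule. For this to be legitimate, $\texttt{x}$ must not already be bound in $\Gamma'$; we may assume this without loss of generality by $\alpha$-renaming the bound variable (equivalently, by the usual Barendregt variable convention), and Lemma~\ref{lem:permutation} absorbs the fact that $\texttt{x:T}_1$ is appended at the end. This freshness bookkeeping in the binder cases is the only place any subtlety arises; all remaining cases are a mechanical reapplication of the corresponding rule.
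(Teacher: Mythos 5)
Your proposal is correct and takes essentially the same approach as the paper, which simply states that the result follows by a straightforward induction on the typing rules. Your case analysis, including the $\alpha$-renaming care in the \textsc{T-Abs} and \textsc{T-Coroutine} binder cases, is exactly the detail that the paper's one-line proof leaves implicit.
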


\begin{proof}
The proof is a straightforward induction on the typing rules.
\end{proof}

\begin{lemma}[Substitution]\label{lem:substitution}
If $\Sigma | \Gamma , \texttt{x:S} \vdash \texttt{t:T} | \texttt{T}_y$,
and $\Sigma | \Gamma \vdash \texttt{s:S} | \bot$,
then
$\Sigma | \Gamma \vdash [\texttt{x} \mapsto \texttt{s}] \texttt{t:T} | \texttt{T}_y$.
\end{lemma}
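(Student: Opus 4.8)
The plan is to proceed by induction on the derivation of $\Sigma | \Gamma , \texttt{x:S} \vdash \texttt{t:T} | \texttt{T}_y$, following the standard substitution-lemma template but adapted to the five-place judgement and the extended term language. The case analysis is on the last rule used in that derivation. Throughout, I will use the fact (Lemma \ref{lem:weakening}) that typing is preserved under context extension, and (Lemma \ref{lem:permutation}) that it is insensitive to the order of bindings in $\Gamma$; the latter lets me treat $\texttt{x}$ as if it were the last-bound variable even when a rule extends $\Gamma$ with a fresh binding $\texttt{y:T}_1$ inside a premise.

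\textbf{The variable case (\textsc{T-Var}).} If $\texttt{t} = \texttt{x}$, then $[\texttt{x}\mapsto\texttt{s}]\texttt{x} = \texttt{s}$, and we must show $\Sigma | \Gamma \vdash \texttt{s:T} | \texttt{T}_y$. From \textsc{T-Var} we know $\texttt{x:T} \in \Gamma,\texttt{x:S}$; since $\texttt{x}$ is fresh for $\Gamma$ this forces $\texttt{T} = \texttt{S}$ and $\texttt{T}_y = \bot$. The hypothesis $\Sigma | \Gamma \vdash \texttt{s:S} | \bot$ is then exactly what we need. If $\texttt{t} = \texttt{y}$ for some $\texttt{y} \neq \texttt{x}$, then $[\texttt{x}\mapsto\texttt{s}]\texttt{y} = \texttt{y}$, and $\texttt{y:T}\in\Gamma$, so \textsc{T-Var} re-derives the judgement directly. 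The cases \textsc{T-Unit}, \textsc{T-Instance} and \textsc{T-Empty} are immediate, since the term is unchanged by the substitution and the typing does not consult $\Gamma$'s $\texttt{x}$-binding.

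\textbf{The binding cases (\textsc{T-Abs}, \textsc{T-Coroutine}).} Here $\texttt{t} = \texttt{(y:T}_1\texttt{)=>t}_2$ (resp.\ a coroutine literal), and the premise types $\texttt{t}_2$ under $\Gamma,\texttt{x:S},\texttt{y:T}_1$. Assuming (by the usual Barendregt convention, and freshening $\texttt{y}$ if necessary so that $\texttt{y}\neq\texttt{x}$ and $\texttt{y}$ does not occur free in $\texttt{s}$) we apply permutation to reorder this to $\Gamma,\texttt{y:T}_1,\texttt{x:S}$, weaken the hypothesis $\Sigma | \Gamma \vdash \texttt{s:S} | \bot$ to $\Sigma | \Gamma,\texttt{y:T}_1 \vdash \texttt{s:S} | \bot$, and apply the induction hypothesis to get $\Sigma | \Gamma,\texttt{y:T}_1 \vdash [\texttt{x}\mapsto\texttt{s}]\texttt{t}_2\texttt{:T}_2 | \bot$ (the yield type of the body is $\bot$ in both rules). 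Re-applying \textsc{T-Abs} (resp.\ \textsc{T-Coroutine}) and pushing the substitution under the binder ($[\texttt{x}\mapsto\texttt{s}]\texttt{(y:T}_1\texttt{)=>t}_2 = \texttt{(y:T}_1\texttt{)=>}[\texttt{x}\mapsto\texttt{s}]\texttt{t}_2$) closes these cases.

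\textbf{The remaining congruence cases.} All other rules — \textsc{T-App}, \textsc{T-Ctx}, \textsc{T-Start}, \textsc{T-Yield}, \textsc{T-Snapshot}, \textsc{T-Fix}, \textsc{T-Resume}, \textsc{T-AppCor}, \textsc{T-Suspension}, \textsc{T-Resumption} — are purely congruential: none of them binds a new variable, so the induction hypothesis applies directly to every typed subterm (using the same $\Sigma$, $\Gamma$, $\texttt{x:S}$, and the same $\texttt{s}$), and substitution commutes with the term constructor in each case. One then re-applies the same typing rule to the substituted subterms. A minor point worth noting in \textsc{T-Ctx}: the premise has yield type $\bot$ while the conclusion has $\texttt{T}_y$; the induction hypothesis preserves the $\bot$ in the premise, and re-applying \textsc{T-Ctx} restores the arbitrary $\texttt{T}_y$. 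I expect \emph{no serious obstacle}: the only subtlety is the routine bookkeeping of variable capture in the binding cases, which is handled uniformly by permutation plus weakening plus the convention that bound variables are chosen fresh for $\texttt{s}$. The hypothesis $\Sigma | \Gamma \vdash \texttt{s:S} | \bot$ having yield type $\bot$ is exactly what makes the weakening step (rather than something like \textsc{T-Ctx}) suffice whenever we descend under a binder.
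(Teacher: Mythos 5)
Your proposal is correct and follows essentially the same route as the paper: induction on the typing derivation, with the binding cases (\textsc{T-Abs}, \textsc{T-Coroutine}) handled by permutation of $\Gamma$ plus weakening of $\Sigma | \Gamma \vdash \texttt{s:S} | \bot$ to the extended context before invoking the induction hypothesis, exactly as in the paper's worked \textsc{T-Coroutine} case. One small slip: in \textsc{T-Coroutine} the body is typed with yield type $\texttt{T}_y$, not $\bot$ (only \textsc{T-Abs} forces $\bot$; the coroutine rule ``swallows'' the yield type in the conclusion), but this does not affect your argument since the induction hypothesis applies at an arbitrary yield type and \textsc{T-Coroutine} is then re-applied unchanged.
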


\begin{proof}
The proof is a straightforward induction on the typing rules.
The more interesting cases are
\textsc{T-Abs}, \textsc{T-Var} and \textsc{T-Coroutine},
and they rely on \ref{lem:permutation} and \ref{lem:weakening}.
We show the case for \textsc{T-Coroutine},
where
$\texttt{t} =
\texttt{(y:T}_2\texttt{)}\overset{\texttt{T}_y}{\rightsquigarrow}\texttt{t}_1$,
$\texttt{T} = \texttt{T}_2\overset{\texttt{T}_y}{\rightsquigarrow}\texttt{T}_1$,
and
$\Sigma | \Gamma , \texttt{x:S} , \texttt{y:T}_2 \vdash
\texttt{t}_1\texttt{:T}_1 | \texttt{T}_y$.
By applying permutation, we obtain
$\Sigma | \Gamma , \texttt{y:T}_2 , \texttt{x:S} \vdash
\texttt{t}_1\texttt{:T}_1 | \texttt{T}_y$.
By applying weakening,
we obtain $\Sigma | \Gamma , \texttt{y:T}_2 \vdash \texttt{s:S} | \texttt{T}_y$.
We use the last two results with the induction hypothesis
to obtain
$\Sigma | \Gamma , \texttt{y:T}_2 \vdash
[\texttt{x} \mapsto \texttt{s}] \texttt{t}_1\texttt{:T}_1 | \texttt{T}_y$.
Finally, from \textsc{T-Coroutine}, we get
$\Sigma | \Gamma , \texttt{y:T}_2 \vdash
\texttt{(y:T}_2\texttt{)}\overset{\texttt{T}_y}{\rightsquigarrow}
[\texttt{x} \mapsto \texttt{s}] \texttt{t}_1\texttt{:T}_1 | \bot$.
Since $\texttt{x} \neq \texttt{y}$ (we can rename coroutine variables as needed),
the result follows.
\end{proof}

\begin{lemma}[Suspension]\label{lem:suspension}
If $\Sigma | \Gamma \vdash P[\llbracket \texttt{t} \rrbracket_\texttt{v}] : €T€ | \texttt{T}_y$,
then
If $\Sigma | \Gamma \vdash \llbracket P[\texttt{t}] \rrbracket_\texttt{v} : €T€ | \texttt{T}_y$.
\end{lemma}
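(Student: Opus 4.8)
The plan is to proceed by structural induction on the suspension context $P$, peeling off one constructor at a time and appealing to the inversion lemma (Lemma~\ref{lem:inversion}) at each step. Recall that $P$ is one of the one-hole contexts $[\cdot]\texttt{(t)}$, $\texttt{v(}[\cdot]\texttt{)}$, $\texttt{start(}[\cdot]\texttt{,t)}$, $\texttt{start(v,}[\cdot]\texttt{)}$, $\texttt{yield(}[\cdot]\texttt{)}$, $\texttt{snapshot(}[\cdot]\texttt{)}$, $\texttt{fix(}[\cdot]\texttt{)}$, or one of the four $\texttt{resume}$ shapes. In each case $P[\llbracket\texttt{t}\rrbracket_\texttt{v}]$ is built from a fixed outer term former applied to $\llbracket\texttt{t}\rrbracket_\texttt{v}$ (and possibly other subterms), so I first use the appropriate part of Lemma~\ref{lem:inversion} to decompose the derivation of $\Sigma|\Gamma\vdash P[\llbracket\texttt{t}\rrbracket_\texttt{v}]\texttt{:T}|\texttt{T}_y$, then invoke Lemma~\ref{lem:inversion}(11) on the hypothesis about $\llbracket\texttt{t}\rrbracket_\texttt{v}$ to learn that $\Sigma|\Gamma\vdash\texttt{t:T}_{\text{sub}}|\texttt{T}_y$ and $\Sigma|\Gamma\vdash\texttt{v:T}_y|\bot$, and finally reassemble: apply the original term former to $\texttt{t}$ instead of $\llbracket\texttt{t}\rrbracket_\texttt{v}$ to obtain $\Sigma|\Gamma\vdash P[\texttt{t}]\texttt{:T}|\texttt{T}_y$, and wrap with \textsc{T-Suspension} using the same witness $\texttt{v}$.

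Concretely, take the representative case $P = [\cdot]\texttt{(t}_2\texttt{)}$. By Lemma~\ref{lem:inversion}(4), the derivation of $\Sigma|\Gamma\vdash\llbracket\texttt{t}_1\rrbracket_\texttt{v}\texttt{(t}_2\texttt{):T}|\texttt{T}_y$ gives either a function-application or a coroutine-application shape; in the function case there is $\texttt{T}_2$ with $\Sigma|\Gamma\vdash\llbracket\texttt{t}_1\rrbracket_\texttt{v}\texttt{:T}_2\texttt{=>T}|\texttt{T}_y$ and $\Sigma|\Gamma\vdash\texttt{t}_2\texttt{:T}_2|\texttt{T}_y$. Applying Lemma~\ref{lem:inversion}(11) to the first of these yields $\Sigma|\Gamma\vdash\texttt{t}_1\texttt{:T}_2\texttt{=>T}|\texttt{T}_y$ and $\Sigma|\Gamma\vdash\texttt{v:T}_y|\bot$. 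Then \textsc{T-App} (or \textsc{T-AppCor} in the coroutine case) gives $\Sigma|\Gamma\vdash\texttt{t}_1\texttt{(t}_2\texttt{):T}|\texttt{T}_y$, and \textsc{T-Suspension} applied to this together with $\Sigma|\Gamma\vdash\texttt{v:T}_y|\bot$ gives $\Sigma|\Gamma\vdash\llbracket\texttt{t}_1\texttt{(t}_2\texttt{)}\rrbracket_\texttt{v}\texttt{:T}|\texttt{T}_y$, as required. The $\texttt{yield}$ case is slightly special because \textsc{T-Yield} forces the inner yield type to equal the subterm's type, but this is exactly preserved by Lemma~\ref{lem:inversion}(7) and poses no difficulty; all the remaining cases (the other $\texttt{resume}$ holes, $\texttt{start}$, $\texttt{snapshot}$, $\texttt{fix}$, and $\texttt{v(}[\cdot]\texttt{)}$) follow the identical three-move pattern of invert-outer, invert-suspension, reassemble.

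The only genuinely delicate point is the interaction with \textsc{T-Ctx}: the given derivation of $\Sigma|\Gamma\vdash P[\llbracket\texttt{t}\rrbracket_\texttt{v}]\texttt{:T}|\texttt{T}_y$ may end with an application of \textsc{T-Ctx}, meaning the real derivation has yield type $\bot$ and $\texttt{T}_y$ is arbitrary. I would handle this uniformly by noting that Lemma~\ref{lem:inversion} is already stated so as to absorb \textsc{T-Ctx} (its conclusions quantify over the yield type without assuming the last rule), and by observing that if the reconstructed judgement $\Sigma|\Gamma\vdash\llbracket P[\texttt{t}]\rrbracket_\texttt{v}\texttt{:T}|\bot$ holds then \textsc{T-Ctx} re-weakens it to any $\texttt{T}_y$; more carefully, I would first prove the lemma for the "tight" yield type produced by inversion and then close under \textsc{T-Ctx}. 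I expect this bookkeeping around \textsc{T-Ctx} — making sure the yield type witnessing \textsc{T-Suspension} matches the one in the statement — to be the main obstacle, though it is entirely routine; everything else is a mechanical case split over the eight shapes of $P$.
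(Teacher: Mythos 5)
Your proposal is correct and follows essentially the same route as the paper's proof: a case analysis on the shape of $P$, inverting the outer typing rule and \textsc{T-Suspension} to extract $\Sigma|\Gamma\vdash\texttt{t:T}_{\text{sub}}|\texttt{T}_y$ and $\Sigma|\Gamma\vdash\texttt{v:T}_y|\bot$, then reassembling with the original rule and \textsc{T-Suspension}. Your extra care about the \textsc{T-AppCor} alternative and the \textsc{T-Ctx} bookkeeping only makes explicit what the paper leaves implicit in its single representative case.
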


\begin{proof}
By cases analysis on the suspension context $P$.
We only show the case for application, other cases are similar.

\proofcase{$P = [\cdot]€(t€_1€)€ $}

From the typing rule \textsc{T-App}, we have
$\Sigma | \Gamma \vdash \llbracket €t€ \rrbracket_\m{v} €:T€_1€=>T€ | €T€_y$ and
$\Sigma | \Gamma \vdash €t€_1€:T€_1 | €T€_y$. From the typing rule \textsc{T-Suspension}, we have
$\Sigma | \Gamma \vdash €v:T€_y | \bot$ and $\Sigma | \Gamma \vdash €t:T€_1€=>T€ | €T€_y$.
Now, it's easy to apply the typing rules to show that
$\Sigma | \Gamma \vdash \llbracket €t(t€_1€)€ \rrbracket_\m{v}€:T€ | €T€_y$.

\end{proof}

\begin{theorem}[Preservation]\label{thm:preservation}
If a term and the coroutine store are well-typed,
that is,
$\Sigma | \Gamma \vdash \texttt{t:T} | \texttt{T}_y$,
and
$\Sigma | \Gamma \vdash \mu$,
and if $\texttt{t} | \mu \rightarrow \texttt{t}' | \mu'$,
then there exists $\Sigma' \supseteq \Sigma$
such that
$\Sigma' | \Gamma \vdash \texttt{t}'\texttt{:T} | \texttt{T}_y$
and
$\Sigma' | \Gamma \vdash \mu'$.
\end{theorem}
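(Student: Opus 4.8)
The plan is to prove Preservation by induction on the derivation of the transition $\texttt{t} | \mu \rightarrow \texttt{t}' | \mu'$, i.e. by case analysis on the last evaluation rule used, with \textsc{E-Context} the only genuinely inductive case. In each case I first apply the Inversion Lemma~\ref{lem:inversion} to decompose the derivation of $\Sigma | \Gamma \vdash \texttt{t}:\texttt{T} | \texttt{T}_y$, exposing the types of the immediate subterms and, whenever $\texttt{t}$ mentions an instance label $i$, the binding $\Sigma(i)$; I then reassemble a typing derivation for $\texttt{t}'$, taking $\Sigma' = \Sigma$ except in \textsc{E-Start} and \textsc{E-Snapshot}. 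One micro-observation is used repeatedly: if a subterm is syntactically a value, inversion shows its yield type may be taken to be $\bot$ (every value-introduction rule produces $\bot$, and only \textsc{T-Ctx} alters the yield type, and only starting from $\bot$), so \textsc{T-Ctx} lets me re-type that value at whatever yield type the reconstruction of $\texttt{t}'$ demands.

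For the $\beta$-like axioms \textsc{E-AppAbs}, \textsc{E-AppCor} and \textsc{E-Fix}, inversion gives the body typing $\Sigma | \Gamma, \texttt{x}:\texttt{S} \vdash \texttt{t}_b:\texttt{T}' | \texttt{T}_y'$ and the argument typing at yield type $\bot$ (for \textsc{E-AppAbs} and \textsc{E-AppCor} one must also observe, from inversion, that a $\lambda$-value has only a function type and a coroutine-value only a coroutine type, so the relevant disjunct of the application case of Lemma~\ref{lem:inversion} is forced); the Substitution Lemma~\ref{lem:substitution} then yields $\Sigma | \Gamma \vdash [\texttt{x}\mapsto\texttt{s}]\texttt{t}_b:\texttt{T}' | \texttt{T}_y'$, and \textsc{T-Ctx} adjusts the yield type if needed; the store is unchanged. \textsc{E-Yield} and \textsc{E-Resume2} are immediate re-applications of typing rules (\textsc{T-Suspension} with \textsc{T-Unit} and \textsc{T-Empty}, resp. \textsc{T-AppCor}), and \textsc{E-Pause} is exactly the Suspension Lemma~\ref{lem:suspension}. \textsc{E-Context} is the inductive case: decompose $\Sigma | \Gamma \vdash E[\texttt{t}]:\texttt{T} | \texttt{T}_y$ into a typing of the hole $\Sigma | \Gamma \vdash \texttt{t}:\texttt{S} | \texttt{T}_y'$ for some $\texttt{S}$ and $\texttt{T}_y'$ determined by $E$, together with the typings of the other positions of $E$; apply the induction hypothesis to obtain $\Sigma' \supseteq \Sigma$ with $\Sigma' | \Gamma \vdash \texttt{t}':\texttt{S} | \texttt{T}_y'$ and $\Sigma' \vdash \mu'$; then re-plug, using Weakening~\ref{lem:weakening} on the other positions.

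The substantive cases are the store-modifying ones. In \textsc{E-Start} and \textsc{E-Snapshot} I read the types $\texttt{T}_y, \texttt{T}_2$ of the fresh instance off the inversion of \textsc{T-Start}, resp. \textsc{T-Snapshot}, and set $\Sigma' = \Sigma, i:\texttt{T}_y \leftrightsquigarrow \texttt{T}_2$. Then \textsc{T-Instance} followed by \textsc{T-Ctx} gives $\texttt{t}' = i$ the required type and yield type, and $\Sigma' \vdash \mu'$ follows because every old binding still type-checks under the larger $\Sigma'$ by Weakening, while the new binding type-checks by the Substitution Lemma together with the coroutine-body typing obtained from inversion of \textsc{T-Coroutine} (in \textsc{E-Start}), or directly because the copied term was already well-typed in $\mu$ (in \textsc{E-Snapshot}). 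In \textsc{E-Resume1}, \textsc{E-Capture} and \textsc{E-Terminate} I keep $\Sigma' = \Sigma$ and rely on inversion of \textsc{T-Resume} and \textsc{T-Resumption}, which forces the running term bound to $i$ to have exactly the return type $\texttt{T}_2$ and yield type $\texttt{T}_y$ recorded in $\Sigma(i)$; the result term---$\langle \texttt{t},\texttt{v}_2,\texttt{v}_3,\texttt{v}_4\rangle_i$ for \textsc{E-Resume1}, $\texttt{v}_3(\texttt{v})$ for \textsc{E-Capture}, $\texttt{v}_2(\texttt{v})$ for \textsc{E-Terminate}---is re-typed by \textsc{T-Resumption}, resp. \textsc{T-AppCor}, and the rewritten store entry ($i \triangleright \llbracket \texttt{t}\rrbracket_\varnothing$, resp. $i \triangleright \texttt{t}_1$, resp. $i \triangleright \llbracket \texttt{v}\rrbracket_\varnothing$) is re-typed via \textsc{T-Suspension} and \textsc{T-Empty} from the component typings just extracted, so $\Sigma \vdash \mu'$.

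The main obstacle is precisely this last family of cases: maintaining $\Sigma \vdash \mu'$ through the resumption rules, i.e. checking that whatever term is written into $\mu(i)$ still carries the type recorded in $\Sigma(i)$. This goes through only because \textsc{T-Resumption} is engineered so that the first component of a resumption term has the instance's declared return and yield types, and because a term suspended with the empty pending value, $\llbracket\cdot\rrbracket_\varnothing$, is typeable via \textsc{T-Empty} independently of what it computes---this is what validates the store convention that $\llbracket\cdot\rrbracket_\varnothing$ marks a running or completed instance. Two smaller pieces of bookkeeping need attention: store entries are typed in the empty context (Definition~\ref{lem:well-typed-store}), so one must note that the terms substituted into the store in \textsc{E-Start} are closed, which holds since we run only closed programs and substitution preserves closedness; and the premises of \textsc{T-Resumption} require the handler values at yield type $\bot$, which is where the ``values yield $\bot$'' observation bridges the gap from \textsc{T-Resume}, whose premises carry yield type $\texttt{T}_w$. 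A lesser subtlety arises for \textsc{E-Context}: under the context $\langle E,\texttt{v},\texttt{v},\texttt{v}\rangle_i$ the hole's yield type differs from the whole term's, so the context-typing statement must track the hole's type and yield type separately from those of the conclusion.
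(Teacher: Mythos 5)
Your proposal is correct and, in substance, matches the paper's proof: the same ingredients (Lemma~\ref{lem:inversion}, Lemma~\ref{lem:substitution}, Lemma~\ref{lem:weakening}, Lemma~\ref{lem:suspension}, extending $\Sigma$ only for \textsc{E-Start}/\textsc{E-Snapshot}, and re-establishing $\Sigma' \vdash \mu'$ via \textsc{T-Suspension}/\textsc{T-Empty} in the resumption cases) appear in the same roles. The only difference is organizational --- you induct on the transition derivation where the paper inducts on the typing derivation --- and you are in fact more explicit than the paper about the two bookkeeping points (values can be re-typed at yield type $\bot$ to bridge \textsc{T-Resume} to \textsc{T-Resumption}, and the hole's yield type under $\langle E,\texttt{v},\texttt{v},\texttt{v}\rangle_i$ differs from the whole term's), which the paper's proof glosses over.
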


\begin{proof}
We prove this by the induction on the typing derivations.
Cases \textsc{T-Unit}, \textsc{T-Abs}, \textsc{T-Coroutine},
\textsc{T-Empty} and \textsc{T-Instance} are straightforward,
since \texttt{t} is a value and does not reduce.

Note that we only consider reduction rules, ignoring
the rule \textsc{E-Context} and \textsc{E-Pause}. This is valid
because induction hypothesis makes the case \textsc{E-Context} trivial,
and \ref{lem:suspension} makes the case \textsc{E-Pause} trivial.

\proofcase{T-App}
$\texttt{t} = \texttt{t}_1\texttt{(t}_2\texttt{)}$

Consider the evaluation rule \textsc{E-AppAbs}.
Both \texttt{t}$_1$ and \texttt{t}$_2$ are values,
their yield type is $\texttt{T}_y = \bot$.
Moreover, then \texttt{t}$_1$ must have the form $\texttt{(x:T}_2\texttt{)=>t}_{11}$,
so by \textsc{E-AppAbs}
$\texttt{t'} = [\texttt{x} \mapsto \texttt{t}_2] \texttt{t}_{11}$.
From \ref{lem:inversion},
we know that
$\Sigma | \Gamma , \texttt{x:T}_2 \vdash \texttt{t}_{11}\texttt{:T}_1 | \bot$.
The claim about $\texttt{t}'$ follows from \ref{lem:substitution},
and $\mu' = \mu$.

\proofcase{T-Start}
$\texttt{t} = \texttt{start(t}_1\texttt{,t}_2\texttt{)}$

By the rule \textsc{E-Start},
$\texttt{t}_1$ and $\texttt{t}_2$ are values,
and $\texttt{t}' = i$ such that
$i \not\in dom(\mu)$.
Since $\mu$ is by assumption well-typed, it follows that $i \not\in dom(\Sigma)$.
But there exists
$\Sigma' = \Sigma , i \texttt{:T}_y \leftrightsquigarrow \texttt{T}_2 \supseteq \Sigma$
such that
$\Sigma' | \Gamma \vdash
i \texttt{:T}_y \leftrightsquigarrow \texttt{T}_2 | \texttt{T}_w$
and
$\Sigma' \vdash \mu'$.

\proofcase{T-Yield}
$\texttt{t} = \texttt{yield(t}_1\texttt{)}$

By the rule \textsc{E-Yield},
$\texttt{t}_1$ is a value,
and $\texttt{t}' = \llbracket \texttt{()} \rrbracket_{\texttt{v}}$.
From \textsc{T-Yield}, we know that
$\Sigma | \Gamma \vdash \texttt{t}_1 \texttt{:T}_y | \texttt{T}_y$.
From \textsc{T-Suspension},
we know that $\Sigma | \Gamma \vdash \texttt{t}' \texttt{:Unit} | \texttt{T}_y$.
We also know that $\mu' = \mu$, which proves the claim.

\proofcase{T-Snapshot}
$\texttt{t} = \texttt{snapshot(t}_1\texttt{)}$

This case is similar to \textsc{T-Yield},
but we rely on \textsc{E-Snapshot} for the transition,
and on \textsc{T-Instance} to type the resulting term $\texttt{t}' = i$.

\proofcase{T-Resume}
$\texttt{t} = \texttt{resume(t}_1\texttt{,t}_2\texttt{,t}_3\texttt{,t}_4\texttt{)}$

This case is similar to \textsc{T-Yield},
but we distinguish two cases --
that the value $\texttt{t}_1 = i$ is a terminated coroutine instance,
in which case we rely on \textsc{E-Resume2} for the transition,
and that $i$ is not terminated,
in which case we rely on \textsc{E-Resume1} for the transition.
We furthermore rely on \textsc{T-App} and \textsc{T-Unit}
to prove the typing relation on $\texttt{t}'$ in the former case,
and on \textsc{T-Resumption} in the latter.
In both of these cases,
we rely on \textsc{T-Empty}, \textsc{T-Suspension} and \textsc{T-Resumption} to
establish that $\Sigma \vdash \mu'$.

\proofcase{T-AppCor}
$\texttt{t} = \texttt{t}_1\texttt{(t}_2\texttt{)}$

By \textsc{E-AppCor},
we know
$\texttt{t}_1 = \texttt{(x:T}_2\texttt{)}
\overset{\texttt{T}_y}{\rightsquigarrow}
\texttt{t}_{11}$,
$\texttt{t}' = [\texttt{x} \mapsto \texttt{t}_2] \texttt{t}_{11}$,
and $\mu' = \mu$,
so the result follows from \ref{lem:substitution}.

\proofcase{T-Fix}
$\texttt{t} = \texttt{fix(t}_1\texttt{)}$

Trivial by \textsc{E-Fix} and \ref{lem:substitution}.

\proofcase{T-Resumption}
$\texttt{t} = \langle \texttt{t}_1\texttt{,v}_2\texttt{,v}_3\texttt{,v}_4 \rangle_i$

There are two subcases.
(1) The reduction rule is \textsc{E-Terminate}. we have
$\texttt{t}' = \texttt{t}_2\texttt{(t}_1\texttt{)}$.
The claim then follows from \textsc{T-App}, \textsc{T-Empty}
and \textsc{T-Resumption}.
(2) The reduction rule is \textsc{E-Capture}. The claim similarly follows.

\proofcase{T-Suspension}
$\texttt{t} = \llbracket \texttt{t}_1 \rrbracket_{\texttt{v}}$

Trivial, as a suspension cannot take a step.
\end{proof}

We are now ready to state another safety property
that follows directly from the preservation theorem.
We want to show that if the program was typed such that the yield type is $\bot$,
then the program will not yield a value outside
of a coroutine resumption.

\begin{cor}[Yield safety]\label{cor:yield-safety}
A well-typed user program \texttt{t}$_u$ never evaluates to a suspension term
$\llbracket \texttt{t} \rrbracket_{\texttt{v}}$.
\end{cor}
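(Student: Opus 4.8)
The plan is to combine Preservation (Theorem~\ref{thm:preservation}) with an inversion argument showing that no suspension can be typed with yield type $\bot$. Since $\texttt{t}_u$ is a well-typed user program, Definition~\ref{def:well-typed-term} gives $\Sigma | \varnothing \vdash \texttt{t}_u\texttt{:T} | \bot$ for some $\texttt{T}$ and $\Sigma$; as $\texttt{t}_u$ is a user term it contains no instance labels, so we may take $\Sigma = \varnothing$, and the empty store satisfies $\varnothing \vdash \varnothing$. ``Evaluates to'' means $\texttt{t}_u | \varnothing \rightarrow^{\ast} \texttt{t}' | \mu'$; iterating Theorem~\ref{thm:preservation} along this reduction sequence yields some $\Sigma' \supseteq \varnothing$ with $\Sigma' | \varnothing \vdash \texttt{t}'\texttt{:T} | \bot$ and $\Sigma' \vdash \mu'$. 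It therefore suffices to show that a closed term typable with yield type $\bot$ is never of the form $\llbracket \texttt{t} \rrbracket_{\texttt{v}}$.

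Suppose otherwise, so that $\Sigma' | \varnothing \vdash \llbracket \texttt{t} \rrbracket_{\texttt{v}}\texttt{:T} | \bot$. By the inversion clause for suspensions (Lemma~\ref{lem:inversion}, instantiated with yield type $\bot$), we get $\Sigma' | \varnothing \vdash \texttt{v}\texttt{:}\bot | \bot$. Now $\texttt{v}$ is a value, hence one of $\texttt{()}$, an abstraction, a coroutine, a coroutine instance, or the empty term $\varnothing$. In each of the first four cases, Lemma~\ref{lem:inversion} (equivalently the relevant clause of Lemma~\ref{lem:canonical}) pins the type of $\texttt{v}$ down to $\texttt{Unit}$, a function type, a coroutine type, or an instance type respectively, none of which is syntactically $\bot$, a contradiction. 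The only remaining possibility is $\texttt{v} = \varnothing$.

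Ruling out $\texttt{v} = \varnothing$ is the one delicate point: because \textsc{T-Empty} lets $\varnothing$ inhabit every type (including $\bot$), the inversion argument alone does not close this case, and one must appeal to the operational semantics. The empty term does not occur in $\texttt{t}_u$ (it is a runtime term, barred from user programs), and inspection of the transition rules in Figure~\ref{fig:evaluation-rules} shows that no rule places $\varnothing$ into the evaluation term: the occurrences of $\varnothing$ that reductions introduce all sit in the yielded-value slot of a suspension $\llbracket \texttt{t}_0 \rrbracket_{\varnothing}$ that is \emph{stored} in $\mu$ (via \textsc{E-Resume1}, \textsc{E-Resume2}, \textsc{E-Terminate}), never in the term being evaluated. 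I would formalise this as the invariant ``$\varnothing$ does not occur as a subterm of the evaluation term'' and prove it by induction on the length of $\texttt{t}_u | \varnothing \rightarrow^{\ast} \texttt{t}' | \mu'$ — a routine case check over the rules, noting that the contexts $E$ and $P$ never contribute an $\varnothing$ and that \textsc{E-Capture} copies a subterm out of a term rather than out of the store. This invariant gives $\texttt{v} \neq \varnothing$ in the situation above, completing the contradiction and hence the corollary.

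The only step needing real care is this last bookkeeping about $\varnothing$; the iterated application of Preservation and the inversion on \textsc{T-Suspension} are immediate, matching the remark in the text that the result follows directly from preservation.
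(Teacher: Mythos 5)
Your proposal follows essentially the same route as the paper: preservation keeps the yield type at $\bot$ along the reduction, inversion of \textsc{T-Suspension} then forces the yielded value $\texttt{v}$ to have type $\bot$, and this is impossible because no value other than the empty term inhabits $\bot$ and the empty term never appears in the evaluated term. The paper phrases the same argument contrapositively (the suspension forces $\texttt{T}_y \neq \bot$, which preservation pushes back onto $\texttt{t}_u$) and simply \emph{asserts} the two facts you try to spell out, so your forward version is, if anything, the cleaner rendering. One caution about your sketched invariant: as stated (``$\varnothing$ does not occur as a subterm of the evaluation term'') it is not closed under \textsc{E-Resume1}, which copies the \emph{stored} term \texttt{t} of the instance $i$ into the resumption term $\langle \texttt{t},\texttt{v}_2,\texttt{v}_3,\texttt{v}_4 \rangle_i$, so an induction on the evaluation term alone does not go through; you need the joint invariant that $\varnothing$ occurs neither in the evaluation term nor in any stored term, except as the subscript of a top-level stored suspension $\llbracket \texttt{t}_0 \rrbracket_\varnothing$ whose body is itself $\varnothing$-free (the shape introduced by \textsc{E-Resume1} and \textsc{E-Terminate}, and never resumed because of the side condition of \textsc{E-Resume1}). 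With that strengthening the case check is routine and your argument matches the paper's.
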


\begin{proof}
Assume that the program evaluates to the suspension term
$\llbracket \texttt{t} \rrbracket_{\texttt{v}}$.
Then, by \textsc{T-Suspension},
it must be that
$\Sigma | \varnothing \vdash \llbracket \texttt{t} \rrbracket_{\texttt{v}}
\texttt{:T} | \texttt{T}_y$.
Note that $\texttt{T}_y \neq \bot$,
since there is no non-empty term $\texttt{v}$ whose type is $\bot$,
and the empty term cannot appear in the evaluation.
Furthermore, by \ref{thm:preservation},
the original program \texttt{t}$_u$ must have
the same yield type $\texttt{T}_y \neq \bot$.
This is a contradiction, since such a program
would not be a well-typed user program, by \ref{def:well-typed-term}.
\end{proof}


\section{Formal Transformation of $\lambda_\leadsto$}
\label{sec:formal-transformation}

The formal transformation translates $\lambda_\leadsto$ programs
to programs in a simply typed lambda calculus extended with references
and restricted sums.
Intuitively, references are necessary because coroutine instances
are stateful entities -- we use references to store the evaluation state.
The key idea is to translate coroutines into functions
that accept a \emph{store function} as an argument,
as explained shortly.
We begin by introducing the target language.

\begin{definition}[Target language]\label{def:simply-typed-references}
The syntax of the target language is as follows:

\noindent
\begin{minipage}[t]{4.0 cm}
\small
$\ba[t]{l@{\hspace{-7mm}}r}
\texttt{t} ::= & \defmbox{terms:} \\
\quad \texttt{(x:T)=>t}
& \defmbox{abstraction}  \\
\quad \texttt{t(t)}
& \defmbox{application}  \\
\quad \texttt{x}
& \defmbox{variable}  \\
\quad \texttt{()}
& \defmbox{unit value}  \\
\quad \texttt{ref(t)}
& \defmbox{new reference}  \\
\quad \texttt{!t}
& \defmbox{dereferencing}  \\
\quad \texttt{t:=t}
& \defmbox{assignment}  \\
\ea$
\end{minipage}
\begin{minipage}[t]{5.7cm}
\small
$\ba[t]{l@{\hspace{-12mm}}r}
\quad
& \defmbox{}  \\
\quad \texttt{Ret(t)}
& \defmbox{tagged return}  \\
\quad \texttt{Yield(t)}
& \defmbox{tagged yield}  \\
\quad \texttt{Term}
& \defmbox{tagged ended}  \\
\quad \texttt{t match \{}
& \defmbox{pattern match} \\
\quad \texttt{case Ret(x) => t;} \\
\quad \texttt{case Yield(x) => t;} \\
\quad \texttt{case Term => t \}} \\
\ea$
\end{minipage}
\begin{minipage}[t]{4.5 cm}
\small
$\ba[t]{l@{\hspace{-8mm}}r}
\texttt{T} ::= & \defmbox{types:} \\
\biggap \texttt{T => T}
& \defmbox{function}  \\
\biggap \texttt{Unit}
& \defmbox{unit}  \\
\biggap \texttt{Ref[T]}
& \defmbox{reference}  \\
\biggap \texttt{Out[T,T]}
& \defmbox{output}  \\
\ea$
\end{minipage}

\vspace{-0.3cm}
\qed
\end{definition}

Informally, references of type \texttt{Ref[T]} are created with
the \texttt{ref} expression, assigned with \texttt{:=} and dereferenced with \texttt{!}.
Output values of type $\texttt{Out[T}_y\texttt{,T}_r\texttt{]}$ describe the
result of resuming a coroutine --
either a \texttt{Yield(x)}, where \texttt{x} has the type $\texttt{T}_y$,
indicating a yield;
or a \texttt{Ret(x)}, where \texttt{x} has the type $\texttt{T}_r$,
indicating a normal return;
or a \texttt{Term}, indicating a termination.
A pattern match reduces to the term in the respective case.
We do not show the exact typing rules and the operational semantics
for the target language,
since this was already treated in-depth \cite{Harper94asimplified,pierce02}.
We also skip runtime terms, as they are not used in the transformation.

\paratitle{Translation approach}
The transformation applies only to terms
that are lexically enclosed by a coroutine definition.
These terms are transformed into a continuation-passing style (CPS) --
the result of evaluating every term gets passed to a function that represents
the remainder of the enclosing coroutine (\emph{not the entire program}).
For a term of type \texttt{T},
this function takes the value of type \texttt{T},
and returns either a \texttt{Yield} value or a \texttt{Ret} value.

\begin{definition}\label{def:continuation-type}
The \emph{continuation type} is defined as
$
\varkappa\texttt{[T},\texttt{T}_y,\texttt{T}_r\texttt{]}
\triangleq
\texttt{T=>Out[T}_y,\texttt{T}_r\texttt{]}
$.
\end{definition}

Since the evaluation rules in Figure \ref{fig:evaluation-rules}
relied on the instance store $\mu$,
a coroutine instance must translate into a stateful entity.
Concretely, a coroutine instance will become a reference
that stores the continuation of the coroutine's execution,
typed $\texttt{Ref[Unit => Out[T}_y\texttt{, T}_r\texttt{]]}$.
To yield is to modify this reference.
To resume is to read it and run the continuation.

\begin{definition}\label{def:evaluation-state}
The \emph{evaluation state type} is defined as
$
\rho\texttt{[T}_y,\texttt{T}_r\texttt{]}
\triangleq
\texttt{Ref[}\varkappa\texttt{[Unit},\texttt{T}_y,\texttt{T}_r\texttt{]]}
$.
\end{definition}

A coroutine definition translates into a lambda
that takes two arguments.
One argument must obviously correspond
to the coroutine's argument.
The other argument must encode the runtime state
of the coroutine instance.
At first glance, it is tempting to model this state
with the coroutine instance reference,
which would make the coroutine type:

{
\centering
$
\texttt{Ref[}\varkappa\texttt{[Unit},\texttt{T}_y,\texttt{T}_r\texttt{]] => }
\texttt{T => }
\texttt{Out[T}_y\texttt{,T}_r\texttt{]}
$

}

However, such a type would not allow modeling stackful coroutines.
Recall that a coroutine can be either started with \texttt{start},
or called by another coroutine.
In the latter case, a \texttt{yield} inside a coroutine
must provide a continuation that captures not only the current coroutine,
but also the caller coroutine
(i.e. yielding captures the entire call stack).
But, lexically speaking, a coroutine
has no way of knowing what the continuation of its caller is.
A coroutine can only create a continuation for its own scope,
and pass that continuation fragment to its caller.
The caller can then recursively extend the continuation with its own fragment.
Once the bottom of the call stack is reached,
the continuation is stored into the reference.
Therefore, we need to abstract this with a separate \emph{store function}.

\begin{definition}\label{def:store-and-coroutine-type}
The \emph{store function type}
(i.e. a function that stores the continuation),
is defined as
$
\sigma\texttt{[T}_y,\texttt{T}_r\texttt{]}
\triangleq
\varkappa\texttt{[Unit},\texttt{T}_y,\texttt{T}_r\texttt{]=>Unit}
$.
The \emph{coroutine definition type}
(i.e. the coroutine's equivalent after transformation)
is
$
\gamma\texttt{[T}_1,\texttt{T}_y,\texttt{T}_r\texttt{]}
\triangleq
\sigma\texttt{[T}_y,\texttt{T}_r\texttt{]=>T}_1
\texttt{=>Out[T}_y\texttt{,T}_r\texttt{]}
$.
\end{definition}

\begin{definition}\label{def:target-language-sugar}
The terms
$\texttt{t}_1\texttt{;t}_2$,
$\texttt{val}\,\,\texttt{x:T=t}_1\texttt{;t}_2$
and \texttt{()=>t}
are syntactic sugar:
$
\texttt{t}_1;\texttt{t}_2
\triangleq
\texttt{((u:Unit)=>t}_2\texttt{)(t}_1\texttt{)}
$,
and
$
\texttt{val x:T=t}_1\texttt{;t}_2
\triangleq
\texttt{((x:T)=>t}_2\texttt{)(t}_1\texttt{)}
$,
and
$
\texttt{()=>t}
\triangleq
\texttt{(u:Unit)=>t}
$.
\end{definition}

The translation scheme is as follows.
A \texttt{start} term
becomes the creation of a reference $\rho\texttt{[T}_y,\texttt{T}_r\texttt{]}$.
A \texttt{resume} becomes a read from this reference,
followed by a call to a continuation lambda.
A \texttt{yield} calls the store function to store the current continuation lambda.
Finally, when a coroutine calls another coroutine,
a new store function is created,
which takes the continuation of the callee,
and chains it with the continuation of the callsite.

\vspace{0.1cm}
\noindent
\paratitle{Example}
Consider the \texttt{dup} coroutine,
which simply returns the sum of its arguments,
and is translated as follows:

{
\centering
$
\texttt{(x:Int)} \overset{\bot}{\rightsquigarrow} \texttt{x+x}
\quad
\rightarrow
\quad
\texttt{(s:} \sigma\texttt{[}\bot,\texttt{Int}\texttt{]} \texttt{)}
\texttt{=>}
\texttt{(x:Int)}
\texttt{=>}
\texttt{s(()=>Term);Ret(x+x)}
$

}

The body of each coroutine
undergoes a variant of the CPS transform
\cite{Sabry:1992:RPC:141478.141563,Reynolds1993}.
Our transform is particular in the sense
that each term translates not to function that consumes a continuation,
but to a function that takes a store function and a continuation.
In the example above,
the store function is invoked immediately before returning from the coroutine,
in order to update the instance state.

\vspace{0.1cm}
\noindent
\paratitle{Example}
The following coroutine yields the argument once,
and is translated as follows:

{
\centering
$
\texttt{(x:Int)} \overset{\texttt{Int}}{\rightsquigarrow} \texttt{yield(x)}
\quad
\rightarrow
\quad
\texttt{(s:} \sigma\texttt{[}\texttt{Int},\texttt{Unit}\texttt{]} \texttt{)}
\texttt{=>}
\texttt{Int}
\texttt{=>}
\texttt{(y')}
\texttt{(s)(q)}
$

$
\texttt{y'} =
\texttt{(s:}\sigma\texttt{[Int,Unit]}\texttt{)=>}
\texttt{(k:}\varkappa\texttt{[Unit},\texttt{Int},\texttt{Unit}\texttt{]}\texttt{)=>}
\texttt{s(k); Yield(x)}
$

$
\texttt{q} = \texttt{(y:Unit)=>s(()=>Term);Ret(y)}
$

}

The coroutine body is in continuation-passing style.
To see that this transformation is correct,
imagine that we passed the store function and an integer to the translated coroutine.
By evolving the term, we would eventually
pass the continuation to the store function \texttt{s},
and arrive at the \texttt{Yield} value.
We define the precise transformation relation on terms next.

\begin{definition}\label{def:type-sub-and-xi}
The \emph{type substitution} $\tau$ replaces the types of a term as follows:

$
\tau(\texttt{t})
\triangleq
\forall \texttt{T}_1,\texttt{T}_y,\texttt{T}_2
[
\texttt{(x:T}_1\texttt{)} \overset{\texttt{T}_y}{\rightsquigarrow} \texttt{T}_2
\mapsto
\gamma\texttt{[T}_1,\texttt{T}_y,\texttt{T}_2\texttt{]}
]
[
\texttt{T}_y \leftrightsquigarrow \texttt{T}_2
\mapsto
\rho\texttt{[T}_y,\texttt{T}_2\texttt{]}
]
\texttt{t}
$
\\
The notation $\xi$
is an abbreviation used to express transformed terms,
and is defined as:

$
\xi(\texttt{T}_y,\texttt{T}_r,\texttt{T},\texttt{t})
\triangleq
\tau(
\texttt{(s:}\sigma\texttt{[T}_y,\texttt{T}_r\texttt{]}\texttt{)=>}
\texttt{(k:}\varkappa\texttt{[T},\texttt{T}_y,\texttt{T}_r\texttt{]}\texttt{)=>}
\texttt{t}
)
$
\end{definition}

\begin{definition}[Transformation relation]
The \emph{transformation relation} is a five place relation
$\Gamma \vdash \texttt{T}_y | \texttt{T}_r | \texttt{t} \rightarrow \texttt{t}'$,
where $\Gamma$ is the typing context,
$\texttt{T}_y$ is the yield type of the current term,
$\texttt{T}_r$ is the return type of the enclosing coroutine,
$\texttt{t}$ is the term in the source language,
and $\texttt{t}'$ is the term in the target language.
This relation is inductively defined
according to Figures \ref{fig:appendix-transformation}
and \ref{fig:appendix-transformation-outside}.
\qed
\end{definition}

\begin{figure}[t]
  \small
  \centering
  \hspace*{-1.2cm}
  \input{transformation-rules.tex}
  \caption{
  Transformation of coroutines in $\lambda_\leadsto$
  to a simply typed lambda calculus with references
  }
  \label{fig:appendix-transformation}
\end{figure}

The rules fall into two groups.
The first is CPS-based and transforms terms inside coroutines (Fig. \ref{fig:appendix-transformation}).
The second group transforms terms outside of coroutines (Fig. \ref{fig:appendix-transformation-outside}).
All rules in the first group assume that we are inside a body of some coroutine,
which has the yield type $\texttt{T}_y$ and a return type $\texttt{T}_r$
(outside of a body of a coroutine, as we show shortly,
there is no need for a CPS transform).

\begin{wrapfigure}{L}{0.37\textwidth}
  \begin{minipage}{0.37\textwidth}
  \small
  \infrule[\textsc{X-Unit}]
  { \texttt{T}_y \neq \bot
  }
  { \Gamma \vdash \texttt{T}_y | \texttt{T}_r | \texttt{()} \rightarrow
    \xi(\texttt{T}_y, \texttt{T}_r, \texttt{Unit}, \texttt{k(}\texttt{()}\texttt{)})
  }
  \end{minipage}
\end{wrapfigure}

Consider the \textsc{X-Unit} rule,
which transforms \texttt{Unit} constants \emph{inside} coroutines.
A unit constant becomes a $\xi$-function
that takes a store function of type
$\sigma\texttt{[T}_y,\texttt{T}_r\texttt{]}$,
where $\texttt{T}_y$ and $\texttt{T}_r$
are the yield and return type of the surrounding context.
However, the transformed \texttt{Unit} constant
does not yield, and hence does not need to invoke the store function.
Instead, it just invokes the current continuation \texttt{k},
passing it the unit value.

\begin{wrapfigure}{R}{0.47\textwidth}
\begin{minipage}{0.47\textwidth}
\small
\infrule[\textsc{X-Yield}]
{ \texttt{T}_y \neq \bot \quad
  \Gamma \vdash \texttt{t} \texttt{:T} | \texttt{T} \quad
  \Gamma \vdash \texttt{T} | \texttt{T}_r | \texttt{t} \rightarrow \texttt{t}'
  \\*
  \texttt{p} =
  \texttt{t}'\texttt{(s)((x:T)=>s(k);Yield(x))}
}
{ \Gamma \vdash \texttt{T} | \texttt{T}_r |
  \texttt{yield(t)} \rightarrow
  \xi(
  \texttt{T}, \texttt{T}_r, \texttt{Unit},
  \texttt{p}
  )
}
\noindent
\end{minipage}
\end{wrapfigure}

On the other hand, the result of the \textsc{X-Yield} rule
must use the store function \texttt{s}
to store the continuation before yielding.
Given a term \texttt{t} that is recursively translated to $\texttt{t}'$,
the \texttt{yield(t)} term is translated to a $\xi$-function
that first evaluates $\texttt{t}'$
by passing it the store function and the continuation.
The continuation body stores the remainder of the continuation with \texttt{s},
and it itself reduces to \texttt{Yield(x)},
where \texttt{x} is the result of evaluating $\texttt{t}'$.

The \textsc{X-Var} rule simply calls the continuation by
passing the identifier \texttt{x} to the continuation.
The \textsc{X-App} rule assumes that the
subterm $\texttt{t}_1$ evaluates to a function $\texttt{t}_1'$,
and $\texttt{t}_2$ evaluates to $\texttt{t}_2'$.
Consider what happens after both $\texttt{t}_1'$ and $\texttt{t}_2'$
produce values $\texttt{x}_1$ and $\texttt{x}_2$, respectively.
The next step in the evaluation is to simply apply
$\texttt{x}_1$ to $\texttt{x}_2$.
However, the application $\texttt{x}_1\texttt{(x}_2\texttt{)}$
is a continuation for the term $\texttt{t}_2'$,
whose evaluation is itself a continuation for $\texttt{t}_1'$.
The transformation of $\texttt{t}_1\texttt{(t}_2\texttt{)}$
requires chaining these continuations together
as described in the \textsc{X-App} rule.

The \textsc{X-Coroutine} rule
relies on the assumption that the coroutine body \texttt{t}
translates to a function $\texttt{t}'$
under the extended typing context $\Gamma , \texttt{x:T}_1$.
The transformed term has the $\gamma$ type --
it takes a store function and an argument.
The store function is passed to $\texttt{t}'$,
along with a continuation that
stores a terminated continuation \texttt{()=>Term},
and wraps the result of $\texttt{t}'$
into a $\texttt{Ret}$ value.
The resulting $\gamma$ value is then passed
to the continuation of the definition site of the coroutine
(which is itself assumed to be inside another coroutine).

A coroutine created this way is, by rule \textsc{X-Start}, \emph{called}
by first allocating a reference \texttt{x} for the evaluation state
(term $\texttt{q}_1$),
then assigning the continuation of the coroutine
into the reference \texttt{x} (term $\texttt{q}_2$),
and finally, passing the reference \texttt{x}
to the continuation \texttt{k} (term $\texttt{p}$).
Note that the abbreviation $\psi$ is used to construct
the store function -- this is just a store function that assigns
the continuation to the reference.

\begin{definition}\label{def:helper-functions}
The \emph{output transformer} $\Phi$
is a function that,
for a given continuation value \texttt{k} of type
$\varkappa\texttt{[T}_r,\texttt{T}_y,\texttt{T}_q\texttt{]}$,
maps a value of type
$\texttt{Out[T}_y\texttt{,T}_r\texttt{]}$
to a value of type
$\texttt{Out[T}_y\texttt{,T}_q\texttt{]}$,
and is defined as follows:

{
\centering
\small
$
\Phi(\texttt{T}_y, \texttt{T}_r, \texttt{T}_q, \texttt{k})
$
\normalsize
$
\triangleq
$

\small
$
\texttt{(x:Out[T}_y\texttt{,T}_r\texttt{])=>}
\texttt{x match\{case Ret(x)=>k(x);case Yield(x)=>Yield(x);case Term=>Term\}}
$
\normalsize

}

\vspace{1mm}
The \emph{store function constructor} $\psi$
is a function that
maps a reference \texttt{x} of type
$\rho\texttt{[T}_y,\texttt{T}_r\texttt{]}$
to a store function of type
$\sigma\texttt{[T}_y,\texttt{T}_r\texttt{]}$,
and is defined as follows:

{
\centering
\small
$
\psi(\texttt{T}_y, \texttt{T}_r, \texttt{x})
$
\normalsize
$
\triangleq
$
\small
$
\texttt{(x:}\rho\texttt{[T}_y,\texttt{T}_r\texttt{]}\texttt{)=>}
\texttt{(k:}\varkappa\texttt{[Unit},\texttt{T}_y,\texttt{T}_r\texttt{])=>x:=k}
$
\normalsize

}

\qed
\end{definition}

We are now ready to take a look at the \textsc{X-AppCor} rule.
A transformed coroutine application
first constructs a mapping \texttt{f}
between output type of the callee coroutine
and the output type of the caller coroutine (term $\texttt{q}_1$),
whose type is defined as follows:
$
\varphi\texttt{[T}_y,\texttt{T}_r,\texttt{T}_q\texttt{]}
\triangleq
\texttt{Out[T}_y\texttt{,T}_r\texttt{]=>}
\texttt{Out[T}_y\texttt{,T}_q\texttt{]}
$.
This mapping is defined by the abbreviation $\Phi$
(defined in \ref{def:helper-functions}),
which just forwards \texttt{Yield} values.
However, if the callee returns a \texttt{Ret(x)} value,
then the wrapped value \texttt{x} is passed
to the continuation \texttt{k} of the callee
(which itself must return the correctly typed \texttt{Ret} value).

Next, the coroutine application must
create a new store function $\texttt{s}'$
with the appropriate type (term $\texttt{q}_2$).
This new store function
$\texttt{s}'$ passes a modified continuation to
the store function $\texttt{s}$ of the callee,
such that the modified continuation calls
the callee's continuation $\texttt{k}'$,
and then adapts the result using the mapping \texttt{f}.
Finally,
the transformed coroutine $\texttt{x}_1$
is invoked with the new store function $\texttt{s}'$,
and the result transformed using \texttt{f}
(term $\texttt{q}_3$).

Having seen \textsc{X-AppCor},
rules \textsc{X-Snapshot} and \textsc{X-Resume} should be self-explanatory.
The \texttt{snapshot} term
translates to the cloning the $\rho$ reference,
and the \texttt{resume} term
translates to dereferencing the continuation from the $\rho$ reference,
invoking it, and calling the correct handler, case-wise.

We deliberately left \textsc{X-Abs} as the last rule in our discussion.
Recall that the standard abstraction term effectively \emph{resets} the yield type
back to $\bot$ --
the standard abstraction term \emph{cannot} yield back to its callsite.
For this reason, the \textsc{X-Abs} rule
transforms the body of an abstraction term
under the assumption that the enclosing yield and return types are both $\bot$.
However, the fineprint present in each transformation rule
is that the $\texttt{T}_y$ \emph{must not} be $\bot$ --
the transformation rules shown so far cannot translate such a term!

The conclusion is that we need a second set of rules
that govern the transformation \emph{outside} of coroutines.
Naturally, these rules do not require a CPS transform,
and they leave most of the program as is --
they deal with coroutine definitions
and handling coroutine instances.
In particular, there is no analogue for the \textsc{X-AppCor}
and \textsc{X-Yield} rules --
a well-typed user program cannot yield or call coroutines
outside of a lexical scope of a coroutine.

\begin{figure}[t]
  \small
  \centering
  \hspace*{-1.2cm}
  \input{transformation-rules-outside.tex}
  \caption{
  Transformation of non-yielding terms in $\lambda_\leadsto$
  to a simply typed lambda calculus with references
  }
  \label{fig:appendix-transformation-outside}
\end{figure}

The rules in Fig. \ref{fig:appendix-transformation-outside},
whose name starts with \textsc{X-Free},
transform non-yielding terms.
They are analogous to the coroutine transformation rules.
When the yield and return types are $\bot$,
the transformation produces a normal term instead of a $\xi$ expression --
the logic inside each translated coroutine-related operation stays the same,
but the value is not being passed to a continuation.

To switch to the CPS transform from Fig. \ref{fig:appendix-transformation},
the transformation must apply the \textsc{X-FreeCoroutine} rule,
which transforms a coroutine definition.

{
\small
\infrule[\textsc{X-FreeCoroutine}]
{ \texttt{v} = \texttt{(x:T}_1\texttt{)}
  \overset{\texttt{T}_y}{\rightsquigarrow} \texttt{t}
  \quad
  \Sigma | \Gamma \vdash \texttt{v}
  \texttt{:T}_1\overset{\texttt{T}_y}{\rightsquigarrow}\texttt{T}_2 | \bot
  \quad
  \Gamma , \texttt{x:T}_1 \vdash
  \texttt{T}_y | \texttt{T}_2 | \texttt{t} \rightarrow \texttt{t}'
  \quad
  \texttt{T}_y \neq \bot
  \\*
}
{ \Gamma \vdash \bot | \bot |
  \texttt{v}
  \rightarrow
  \tau(
  \texttt{(s:}\sigma\texttt{[T}_y,\texttt{T}_2\texttt{])=>(x:T}_1\texttt{)=>}
  \texttt{t}'\texttt{(s)((x}_2\texttt{:T}_2\texttt{)=>}
  \texttt{s(()=>Term);Ret(x}_2\texttt{))}
  )
}
}

We read this as follows.
Consider a coroutine definition \texttt{v} whose body is the term \texttt{t},
and whose type is
$
\texttt{T}_1\overset{\texttt{T}_y}{\rightsquigarrow}\texttt{T}_2
$.
If the term \texttt{t} translates to a term $\texttt{t}'$,
then the coroutine definition translates to
a function that takes a store function \texttt{s}
and the argument \texttt{x},
and then invokes the transformed body $\texttt{t}'$
with the store function and the continuation.
This continuation stores the \texttt{Term}-returning function (i.e. terminates),
and returns the result $\texttt{Ret(x)}$.
From this, we can conclude that the transformed body $\texttt{t}'$
must be a $\xi$-function from \ref{def:type-sub-and-xi}.

Note, finally, that the only terms that are different
after the transformation are those that mention
elements of the $\lambda_\leadsto$ calculus syntax.
Any term that makes no mention of
$\lambda_\leadsto$-specific terms is left unchanged.
This shows that the transformation is both selective,
and that it does not require the recompilation of legacy programs
that use coroutines.


\section{Related Work}\label{sec:related}

We organize the related work on coroutines into two categories. We start with
the origins and previous formalization approaches, and then discuss related work
on continuations. We do not discuss related concepts like iterators or generators,
and interested readers can refer to our main work \cite{prokopec2018coroutines}.
Whereas coroutines have been studied extensively before,
the main novelty in our work is to augment coroutines with snapshots.
This allows for several new useful use-cases.

\textbf{Origins and formalizations.}
The idea of coroutines dates back to Erdwinn and Conway's work on
a tape-based Cobol compiler and its separability into modules
\cite{Conway:1963:DST:366663.366704}.
Although the original use-case is no longer relevant, other use-cases emerged.
Coroutines were investigated on numerous occasions,
and initially appeared in languages such as Modula-2 \cite{wirth1985programming},
Simula \cite{Birtwhistle:1979:SB:1096934},
and BCPL \cite{journals/spe/MoodyR80}.
A detailed classification of coroutines is given
by Moura and Ierusalimschy \cite{Moura:2009:RC:1462166.1462167},
along with a formalization of asymmetric coroutines
through an operational semantics.
Moura and Ierusalimschy observed that asymmetric first-class stackful coroutines
have an equal expressive power as one-shot continuations,
but did not investigate snapshots,
which make coroutines equivalent to full continuations.
Anton and Thiemann showed that it is possible
to automatically derive type systems
for symmetric and asymmetric coroutines
by converting their reduction semantics
into equivalent functional implementations,
and then applying existing type systems
for programs with continuations \cite{Anton2010}.
James and Sabry identified the input and output types of coroutines
\cite{yield-mainstream},
where the output type corresponds
to the \textit{yield} type described in this paper.
The input type ascribes the value passed to the coroutine when it is resumed.
As a design tradeoff, we chose not to have explicit input values in our model.
First, the input type increases the verbosity of the coroutine type,
which may have practical consequences.
Second, as shown in \cite{prokopec2018coroutines},
the input type can be simulated with the return type of another coroutine,
which yields a writable location, and returns its value when resumed.
Fischer et al. proposed a coroutine-based programming model
for the Java programming language,
along with the respective formal extension of Featherweight Java
\cite{Fischer:2007:TLS:1244381.1244403}.

\textbf{Transformation-based continuations.}
Continuations are closely related to coroutines,
and with the addition of \texttt{snapshot}
the two can express the same programs.
Scheme supports programming with continuations via the \texttt{call/cc} operator,
which has a similar role as \texttt{shift}
in shift-reset delimited continuations
\cite{Danvy:1990:AC:91556.91622,Asai:2017:SCT:3175493.3162069}.
In several different contexts,
it was shown that continuations subsume other control constructs
such as exception handling, backtracking, and coroutines.
Nonetheless, most programming languages do not support continuations.
It is somewhat difficult to provide an efficient implementation of continuations,
since the captured continuations must be callable more than once.
One approach is to transform the program
to continuation-passing style \cite{Sussman:1998:SIE:609145.609186}.
Scala's continuations \cite{Rompf:2009:IFP:1631687.1596596}
implement delimited shift-reset continuations with a CPS transform.
The downside of CPS is the risk of runtime stack overflows
in the absence tail-call optimization, as is the case of JVM.

Optimizing compilers tend to be tailored to the workloads that appear in practice.
For example, it was shown that optimizations such as
inlining, escape analysis, loop unrolling and devirtualization
make most collection programs run nearly optimally
\cite{Prokopec:2017:MCO:3136000.3136002,7092728,Prokopec:196627,Sujeeth:2013:CRC:2524984.2524988,prokopec11}.
However, abstraction overheads associated with coroutines are somewhat new,
and are not addressed by most compilers.
For this reason, compile-time transformations of coroutine-heavy workloads
typically produce slower programs compared to their runtime-based counterparts.
We postulate that targeted high-level JIT optimizations
could significantly narrow this gap.

\textbf{Runtime-based continuations.}
There were several attempts to provide runtime continuation support for the JVM,
ranging from Ovm implementations \cite{LAMP-CONF-2007-002}
based on \texttt{call/cc},
to JVM extensions \cite{Stadler:2009:LCJ:1596655.1596679},
based on the \texttt{capture} and \texttt{resume} primitives.
While runtime continuations are not delimited and can be made very efficient,
maintenance pressure and portability requirements
prevented these implementations from becoming a part of official JVM releases.
An alternative, less demanding approach relies only on stack introspection facilities
of the host runtime \cite{Pettyjohn:2005:CGS:1090189.1086393}.
There exists a program transformation that relies on exception-handling
to capture the stack \cite{Sekiguchi:2001:PIC:647332.722736}.
Here, before calling the continuation,
the saved state is used in method calls to rebuild the stack.
This works well for continuations,
where the stack must be copied anyway,
but may be too costly for coroutine \texttt{resume}.
Bruggeman et al. observed that many use cases
call the continuation only once and can avoid the copying overhead,
which lead to \textit{one-shot continuations}
\cite{Bruggeman:1996:RCP:231379.231395}.
One-shot continuations are akin to coroutines without snapshots.

\textbf{Domain-specific approaches.}
One of the early coroutine applications was data structure traversal.
Push-style traversal with \texttt{foreach} is easy,
but the caller must relinquish control,
and many applications cannot do this
(e.g. the same-fringe benchmark,
in which two trees are traversed pairwise simultaneously).
Java-style iterators with \texttt{next} and \texttt{hasNext}
are harder to implement than a \texttt{foreach} method,
and coroutines bridge this gap.

Iterators in CLU \cite{Liskov:1977:AMC:359763.359789} are essentially coroutines --
program sections with \texttt{yield} statements
that are converted into traversal objects.
C\# inherited this approach -- its iterator type \texttt{IEnumerator} exposes
\texttt{Current} and \texttt{MoveNext} methods.
Since enumerator methods are not first class entities,
it is somewhat harder to abstract suspendable code.
C\# enumerators are not stackful,
so iterator definitions must be implemented inside a single method.
Enumerators can be used for asynchronous programming,
but they require exposing \texttt{yield} in user code.
Therefore, separately from enumerators, C\# exposes async-await primitives.
Some newer languages such as Dart
similarly expose an async-await pair of primitives.

Async-Await in Scala \cite{scala-async}
is implemented using Scala's metaprogramming facilities.
Async-await programs can compose
by expressing asynchronous components as first-class \texttt{Future} objects.
The Async-Await model does not need to be stackful,
since separate modules can be expressed as separate futures.
However, reliance on futures and concurrency makes it hard
to use Async-Await generically.
For example, iterators implemented using futures
have considerable performance overheads
due to synchronization involved in creating future values.

There exist other domain-specific suspension models.
For example,
Erlang's €receive€ statement
effectively captures the program continuation
when awaiting for the inbound message \cite{programming-erlang96}.
A model similar to Scala Async was
devised to generate Rx's \texttt{Observable} values
\cite{EPFL-CONF-188383,Meijer:2012:YMD:2160718.2160735},
and the event stream composition
\cite{Prokopec:2017:EBB:3133850.3133865,Prokopec:2014:CAM:2637647.2637656}
in the reactor model
\cite{reactors-website,10.1007/978-3-319-64203-113,Prokopec:2015:ICE:2814228.2814245,Prokopec:2016:PSR:3001886.3001891},
as well as callbacks usages
in asynchronous programming models based on futures and flow-pools
\cite{SIP14,prokopec12flowpools,EPFL-REPORT-181098,6264/THESES,Schlatter:198208}
can be similarly simplified.
Cilk's spawn-sync model \cite{Leiserson1997} is similar to async-await,
and it is implemented as a full program transformation.
The Esterel language defines a \texttt{pause} statement
that pauses the execution, and continues it in the next event propagation cycle
\cite{Berry:1992:ESP:147276.147279}.
Behaviour trees \cite{6907656} are AI algorithms used to simulate agents --
they essentially behave as AST interpreters with yield statements.

\bibliography{coroutines-ecoop}

\end{document}